\documentclass[preprint,12pt]{elsarticle1}

\usepackage{subfig}
\usepackage{graphicx}
\usepackage{caption,color}   
\usepackage{amssymb}
\usepackage{amsthm}
\usepackage{amsmath}
\usepackage{epic}
\usepackage{setspace}
\usepackage{float}
\usepackage{multirow}

\usepackage{hyperref}
\usepackage{xcolor}
\usepackage{marginnote}
\usepackage{booktabs}
\usepackage{algorithm, algorithmicx, algpseudocode}

\newtheorem{thm}{Theorem}[section]

\newtheorem{lem}[thm]{Lemma}

\newtheorem{defi}[thm]{Definition}
\newtheorem{remark}[thm]{Remark}
\newtheorem{example}[thm]{Example}

\numberwithin{equation}{section}
\journal{}

\begin{document}
\begin{spacing}{1.15}
\begin{frontmatter}
\title{\textbf{The eigenvector centrality of hypergraphs}}

\author{Changjiang Bu}\ead{buchangjiang@hrbeu.edu.cn}
\author{Haotian Zeng}
\author{Qingying Zhang}

\address{School of Mathematical Sciences, Harbin Engineering University, Harbin, PR China}

\begin{abstract}
A hypergraph is called uniform when every hyperedge contains the same number of vertices, otherwise, it is called non-uniform.
In the real world, many systems give rise to non-uniform hypergraphs, such as email networks and co-authorship networks.
A uniform hypergraph has a natural one-to-one correspondence with its adjacency tensor.
In 2019, Benson proposed the eigenvector centrality of uniform hypergraphs via its adjacency tensor \cite{benson2019three}.
In this paper, we define an adjacency tensor for hypergraphs and propose the eigenvector centrality for hypergraphs.
When the hypergraph is uniform, our proposed eigenvector centrality reduces to Benson's.
When each edge of the uniform hypergraph contains exactly two vertices, our proposed centrality reduces to the eigenvector centrality of graphs.
We conducted experiments on several real-world hypergraph datasets.
The results show that, compared to traditional centrality measures, the proposed centrality measure provides a unique perspective for identifying important vertices and can also effectively identify them.
\end{abstract}

\begin{keyword}
Hypergraph, Centrality, Tensor, Eigenvector\\
\emph{AMS classification(2020): \emph{05C65}, \emph{05C82}, \emph{15A69}, \emph{05C50}}
\end{keyword}

\end{frontmatter}

\section{Introduction}
With the development of complex network analysis, traditional graph models have gradually shown limitations in capturing group interactions, collaborative relationships, and higher-order dependencies.
Hypergraph is a natural extension of graph, allowing an edge to contain any number of vertices, thereby enabling a more direct and flexible representation of the higher-order relationships commonly found in the real world.
A hypergraph where every hyperedge contains exactly $k$ vertices is called a $k$-uniform hypergraph.
When $k=2$, such a hypergraph reduces to an ordinary graph. If hyperedges in a hypergraph have varying vertex counts, it is termed a non-uniform hypergraph.
Compared to graphs, hypergraphs can model multi-way relationships, and they naturally find applications in email networks \cite{comrie2021hypergraph}, co-authorship networks \cite{lung2018hypergraph}, transportation networks \cite{prakash2017finding}, drug networks \cite{saifuddin2023hygnn} and so on.

In recent years, research on hypergraph centrality has attracted considerable attention.
Some researchers have extended the study of vertex centrality in graphs to hypergraphs.
This includes classic centrality measures such as degree centrality \cite{kapoor2013weighted}, betweenness centrality \cite{lee2021betweenness}, closeness centrality \cite{aksoy2020hypernetwork,amato2018centrality}.
Additionally, some studies identify significant vertices in hypergraphs from alternative perspectives \cite{amato2018centrality,kovalenko2022vector,vasilyeva2024matrix,tejaswi2024identifying,hu2023identifying}.

Many approaches assess vertex importance through metrics such as degree and shortest paths, yet they often fail to adequately account for the influence of neighboring vertices.
In 1972, Bonacich proposed the eigenvector centrality of graphs \cite{bonacich1972factoring}, and defined the principal eigenvector of the graph's adjacency matrix as the centrality scores of its vertices.
A vertex's importance is proportional to the sum of the importance of its neighbors.
It is widely applied in Google's PageRank search engine \cite{langville2005survey},
social network analysis \cite{howlader2016degree},
brain science networks \cite{lohmann2010eigenvector}, and so on.
To more comprehensively reflect a vertex's global influence within the network, Benson proposed a definition of eigenvector centrality for uniform hypergraphs \cite{benson2019three}, deriving the eigenvector from the adjacency tensor of the hypergraph.
However, most real-world systems correspond to non-uniform hypergraphs.
For example, in co-authorship networks, the number of authors per paper varies, while in email exchange networks, the number of recipients per email also differs.
Although some centrality measures for non-uniform hypergraphs have been proposed, these methods are not grounded in the mathematical framework of eigenvectors. Thus existing methods cannot capture the mutual influence among vertices or the global iterative nature inherent in eigenvector centrality.

In this paper, we propose an eigenvector centrality  for hypergraphs.
The vertex's centrality in a hypergraph is governed by the sum of geometric mean of the scores of all vertices sharing the same hyperedge.
When the hypergraph is uniform, our proposed centrality reduces to the centrality introduced by Benson \cite{benson2019three}.
When each edge of the uniform hypergraph contains exactly two vertices, our proposed centrality reduces to the eigenvector centrality of graphs.
%

This paper is organized as follows.
In Section 2, we introduce the necessary preliminaries for tensors and hypergraphs.
In Section 3, we introduce the least common multiple associated with the hypergraph's cardinality set to construct a unified tensor representation.
This tensor naturally captures structural information from hyperedges of varying sizes and transforms the hypergraph eigenvector centrality problem into an eigenvalue problem of this tensor.
We prove that when the hypergraph is connected, the tensor is weakly irreducible, thereby ensuring the existence and uniqueness (up to scaling) of a positive eigenvector.
An algorithm for computing the centrality proposed in this paper is provided in Section 4, and experiments were performed on real-world hypergraphs.
In Section 5, we summarize the conclusions of this paper and discusses potential directions for future work.

\section{Preliminaries }
For a positive integer $n$, let $[n]=\{1,2,\cdots,n\}$. A $k$-th order $n$-dimensional tensor $\mathcal{A}=(a_{{i_1} {i_2} {\cdots}  {i_k}}) $ is a multi-dimensional array containing $n^k$ elements, where each index $i_j \in [n]$.
 Let  $\mathbb{C}^{[k,n]}$ ($\mathbb{R}^{[k,n]}$) denote the set of $k$-th order $n$-dimensional complex (real) tensors.
For $\mathcal{A}=(a_{{i_1} {i_2} {\cdots}  {i_k}}) \in \mathbb{C}^{[k,n]}$ and $\boldsymbol {x}=(x_1,x_2,{\cdots},x_n)^{\mathsf{T}} \in \mathbb{C}^n$,
$\mathcal{A}{\boldsymbol {x}}^{k-1}$ is an $n$-dimensional vector ,
with the $i$-th component
\begin{align*}
(\mathcal{A}{\boldsymbol {x}}^{k-1})_i=\sum_{{i_2}, {\cdots},  {i_k}=1}^{n}  a_{{i} {i_2} {\cdots} {i_k}} {x_{i_2}}{x_{i_3}}{\cdots}{x_{i_k}}\quad (i \in [n]).
\end{align*}
If there exists  $\lambda \in \mathbb{C}$ and a nonzero vector $\boldsymbol {x}$ such that
\begin{align*}
\mathcal{A}{\boldsymbol {x}}^{k-1}=\lambda \boldsymbol {x}^{[k-1]},
\end{align*}
where $\boldsymbol {x}^{[k-1]}=(x_1^{k-1}, {\cdots}, x_n^{k-1})^{\mathsf{T}}$.
Then $\lambda$  is called an eigenvalue of $\mathcal{A}$,
and $\boldsymbol {x}$ is an eigenvector of  $\mathcal{A}$ corresponding to $\lambda$
\cite{lim2005singular,qi2005eigenvalues}.
The spectral radius $\rho (\mathcal{A})=\max\{|\lambda|:\lambda \in \sigma(\mathcal{A})\}$, where $\sigma(\mathcal{A})$ is the set of all eigenvalues of $\mathcal{A}$.
The eigenvalues of tensors have attracted extensive attention since they have been proposed \cite{chen2024spectra,cooper2012spectra,qi2017tensor,sun2016moore}.
And they have  been applied to hypergraph clustering \cite{chang2020hypergraph}, network traffic anomaly detection \cite{fan2024novel},
crystallography \cite{chen2023c}, mechanics \cite{nikabadze2016eigenvalue}, and so on.

For a  tensor $\mathcal{A}=(a_{{i_1} {i_2} {\cdots}  {i_k}})
\in \mathbb{R}^{[k,n]}$,
let $D_{\mathcal{A}}=(V{(D_{\mathcal{A}})},E{(D_{\mathcal{A}})})$ be the associated directed graph of $\mathcal{A}$,
with the vertex set $V{(D_{\mathcal{A}})}=[n]$, and the arc set
$E{(D_{\mathcal{A}})}=\left\{ {(i,j)| a_{i {i_2}{\cdots}{i_k} } \neq 0 ,  j \in {\left\{{{i_2},{\cdots},{i_k}}\right\}}  }\right\}$ \cite{qi2017tensor}.
$D_{\mathcal{A}}$  is said to be strongly connected if there exists a directed path from  $i$ to $j$ for any distinct $i,j \in {V(D_{\mathcal A})}$.

A tensor $\mathcal{A}=(a_{{i_1} {i_2} {\cdots}  {i_k}}) \in \mathbb{C}^{[k,n]} $ is called weakly reducible if there
exists a nonempty proper index subset $I \subseteq [n]$ such that  $a_{{i_1} {i_2} {\cdots}  {i_k}} = 0$ for $ \forall i_1 \in I
$ and $\{i_2,\cdots,i_k\} \not\subset I $.
If $\mathcal{A}$ is not weakly reducible, then $\mathcal{A}$ is called as weakly irreducible \cite{friedland2013perron}.
The following two lemmas on weakly irreducible tensors are needed for this paper.

\begin{lem}\cite{friedland2013perron}\label{connect}
	The  tensor $\mathcal{A}\in \mathbb{R}^{[k,n]}$ is weakly irreducible if and only if $D_{\mathcal{A}}$ is strongly connected.
\end{lem}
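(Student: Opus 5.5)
We prove the two implications separately, working directly from the definitions of weak reducibility and of the digraph $D_{\mathcal{A}}$. Both directions are contrapositives: a nonempty proper index set $I$ witnessing weak reducibility corresponds to a set of vertices of $D_{\mathcal{A}}$ with no arc leaving it.

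\emph{Strongly connected implies weakly irreducible.} Suppose $\mathcal{A}$ is weakly reducible, with nonempty proper witness $I\subseteq[n]$. Then for every $i_1\in I$ and every entry $a_{i_1 i_2\cdots i_k}\neq 0$ we must have $\{i_2,\dots,i_k\}\subseteq I$. So every arc $(i,j)$ of $D_{\mathcal{A}}$ with $i\in I$ has $j\in I$, and no arc leaves $I$. Choose $i\in I$ and $j\notin I$; such $j$ exists because $I$ is proper. Any directed path starting at $i$ stays inside $I$, so it never reaches $j$, and $D_{\mathcal{A}}$ is not strongly connected.

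\emph{Weakly irreducible implies strongly connected.} Suppose $D_{\mathcal{A}}$ is not strongly connected. Then there are vertices $u,v$ with no directed path from $u$ to $v$. Let $I$ consist of $u$ together with every vertex reachable from $u$. Then $I$ is nonempty, since $u\in I$, and proper, since $v\notin I$. By construction $I$ is closed under outgoing arcs.

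Now take $i_1\in I$ and an index tuple with $\{i_2,\dots,i_k\}\not\subseteq I$, so some $i_t\notin I$. If $a_{i_1 i_2\cdots i_k}\neq0$, then $(i_1,i_t)$ would be an arc of $D_{\mathcal{A}}$, which would put $i_t$ in $I$, a contradiction. Hence $a_{i_1 i_2\cdots i_k}=0$ for all such index tuples, and $I$ witnesses that $\mathcal{A}$ is weakly reducible.

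The only delicate point is matching the definitions exactly. The arc set uses nonzero entries with the row index $i_1$ as tail and any later index as head. The reducibility condition says that every tuple whose row index lies in $I$ but which has some later index outside $I$ has a zero entry. Once this alignment is fixed, the closure-under-reachability construction is the whole argument, and no real obstacle remains.
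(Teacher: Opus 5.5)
Your proof is correct. Both contrapositives are exactly right, and because you use only the zero/nonzero pattern of the entries, the argument holds for arbitrary real tensors as the lemma is stated. You also correctly read the paper's $\not\subset$ as ``not contained in''; under a strict-subset reading the equivalence would fail.

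The paper gives no proof of this lemma and simply cites Friedland--Gaubert--Han. There, weak irreducibility is essentially \emph{defined} as strong connectivity of the associated digraph. Your closure-under-reachability argument is the standard translation between that graph definition and the index-set definition of weak reducibility used in this paper, so it supplies the self-contained justification the paper omits.
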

%

\begin{lem} \cite{friedland2013perron} \label{perron}
	Let $\mathcal{A}\in \mathbb{R}^{[k,n]}$ be a nonnegative weakly irreducible tensor, then the spectral radius $\rho (\mathcal{A})$ is an eigenvalue of $\mathcal{A}$, and there exists a unique positive eigenvector corresponding to $\rho (\mathcal{A})$ (up to its scalar multiples).
\end{lem}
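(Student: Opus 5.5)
The plan is to establish the three claims of Lemma~\ref{perron} in the following order: existence of a nonnegative eigenpair via a fixed-point argument, positivity of that eigenvector using weak irreducibility, and then a Collatz--Wielandt type comparison that gives both uniqueness and the identification of the eigenvalue with $\rho(\mathcal{A})$.

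\textbf{Existence.} Let $\Delta=\{\boldsymbol{x}\ge 0:\sum_i x_i=1\}$. First I would note that weak irreducibility forces $\mathcal{A}\boldsymbol{x}^{k-1}\neq 0$ for every $\boldsymbol{x}\in\Delta$. Indeed, if $\mathcal{A}\boldsymbol{x}^{k-1}=0$, put $I=\{i: x_i=0\}$, which is a proper subset of $[n]$. Any nonzero entry $a_{i i_2\cdots i_k}$ then has some $i_j$ in $I$, and this yields a weak reduction unless $I=\emptyset$. If $I=\emptyset$, every row of $\mathcal{A}$ would vanish, which contradicts strong connectivity of $D_{\mathcal A}$ by Lemma~\ref{connect} (this needs $n\ge 2$; the case $n=1$ is trivial).

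Next define $T(\boldsymbol{x})=\boldsymbol{y}/\sum_i y_i$, where $\boldsymbol{y}=(\mathcal{A}\boldsymbol{x}^{k-1})^{[1/(k-1)]}$. This is a continuous self-map of $\Delta$, so Brouwer's theorem gives a fixed point $\boldsymbol{x}^*\ge 0$ with $\mathcal{A}(\boldsymbol{x}^*)^{k-1}=\lambda^*(\boldsymbol{x}^*)^{[k-1]}$ and $\lambda^*>0$.

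\textbf{Positivity.} Let $I=\{i:x^*_i=0\}$ and suppose $I\neq\emptyset$. For $i\in I$ we get $\sum a_{i i_2\cdots i_k}x^*_{i_2}\cdots x^*_{i_k}=0$, so every nonzero $a_{i i_2\cdots i_k}$ with $i\in I$ has some index in $I$. I need the stronger condition that all of $i_2,\dots,i_k$ lie in $I$. Here the cleanest route is through $D_{\mathcal A}$ and Lemma~\ref{connect}. Since $D_{\mathcal A}$ is strongly connected, there is an arc $(i,j)$ with $i\in I$ and $j\notin I$, coming from an entry $a_{i i_2\cdots i_k}\neq 0$ with $j$ among the $i_l$.

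This is the step I expect to be delicate. That entry may still contain another index in $I$, so the zero product does not immediately give a contradiction. My first attempt is to pass to $\mathcal{B}=\mathcal{A}+\varepsilon\mathcal{I}$, or to the perturbed tensor $\mathcal{A}+\varepsilon\mathcal{J}$ with $\mathcal{J}$ the all-ones tensor. For the all-ones perturbation, fixed points are strictly positive trivially. I would then normalize these fixed points and let $\varepsilon\to 0$. Positivity of the limit would follow from a Harnack-type bound $\max_i x_i/\min_i x_i\le C$, with $C$ independent of $\varepsilon$, derived by chaining the eigen-equations along directed paths of $D_{\mathcal A}$. Along an arc $(i,j)$, the equation at $i$ gives $\lambda x_i^{k-1}\ge a\, x_j\prod(\text{other entries})$. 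Combined with $\lambda\le \max_i\sum a_{i\cdots}$ (bounded, since $\boldsymbol{x}\le \max x$), this lets me propagate lower bounds from a maximal coordinate to all vertices. If this chaining gets messy, the fallback is Friedland--Gaubert--Han's nonlinear Perron--Frobenius argument via the homogeneous monotone map $F(\boldsymbol{x})=(\mathcal{A}\boldsymbol{x}^{k-1})^{[1/(k-1)]}$ and its associated graph.

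\textbf{Uniqueness and $\lambda^*=\rho(\mathcal{A})$.} For uniqueness, suppose $\boldsymbol{x},\boldsymbol{y}>0$ are eigenvectors with eigenvalues $\lambda,\mu$. Set $t=\min_i x_i/y_i$, so that $\boldsymbol{x}\ge t\boldsymbol{y}$ with equality at some set $J$. Monotonicity gives $\lambda x_j^{k-1}\ge \mu t^{k-1}y_j^{k-1}$ for $j\in J$, hence $\lambda\ge\mu$, and by symmetry $\lambda=\mu$. If $J\neq[n]$, strong connectivity supplies an arc leaving $J$, which makes the inequality strict at some $j\in J$; this is a contradiction, so $\boldsymbol{x}=t\boldsymbol{y}$.

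Finally, for any eigenpair $(\lambda,\boldsymbol{z})$, taking absolute values gives $|\lambda||\boldsymbol{z}|^{[k-1]}\le\mathcal{A}|\boldsymbol{z}|^{k-1}$. Comparing $|\boldsymbol{z}|$ with the positive eigenvector $\boldsymbol{x}^*$ through $s=\max_i |z_i|/x^*_i$, evaluated at a maximizing index, yields $|\lambda|\le\lambda^*$. Hence $\lambda^*=\rho(\mathcal{A})$.
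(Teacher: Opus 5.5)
The paper gives no proof of this lemma; it quotes Friedland--Gaubert--Han, so your argument has to stand on its own. Your uniqueness argument via $t=\min_i x_i/y_i$ is correct once a positive eigenvector is available, and so is your comparison giving $|\lambda|\le\lambda^*$. The gap is in producing that positive eigenvector.

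The claim that weak irreducibility forces $\mathcal{A}\boldsymbol{x}^{k-1}\neq 0$ on $\Delta$ is false. What you derive is that every nonzero $a_{ii_2\cdots i_k}$ has \emph{some} index in $I$. That is not a weak reduction, which requires \emph{all} of $i_2,\dots,i_k$ to lie in $I$ whenever $i\in I$. Your argument is the one for irreducible tensors, not weakly irreducible ones. For a counterexample, take $k=3$, $n=2$, $a_{111}=a_{112}=a_{212}=1$ and all other entries zero. Then $D_{\mathcal A}$ is strongly connected and $\mathcal{A}\boldsymbol{x}^2=(x_1^2+x_1x_2,\,x_1x_2)^{\mathsf T}$.
\begin{itemize}
\item $\mathcal{A}\boldsymbol{e}_2^2=0$, so $T$ is undefined at $\boldsymbol{e}_2$.
\item $\boldsymbol{e}_1$ is a fixed point of $T$ with eigenvalue $1<\rho(\mathcal{A})=(1+\sqrt{5})/2$.
\end{itemize}
(The adjacency tensor of a single hyperedge $\{1,2,3\}$ already satisfies $\mathcal{A}\boldsymbol{e}_1^2=0$.) So Brouwer cannot be applied to $\mathcal{A}$ itself, and a nonnegative fixed point need not be positive. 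The positivity step you called delicate is actually false for the unperturbed map. The perturbation is therefore forced, and it must be by $\varepsilon\mathcal{J}$. The shift $\mathcal{A}+\varepsilon\mathcal{I}$ restores well-definedness, but $\boldsymbol{e}_1$ remains a boundary fixed point.

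That leaves the $\varepsilon$-uniform Harnack bound, which you only sketch, and as described it is circular. Along an arc $(i,j)$, the inequality $\lambda x_i^{k-1}\ge a\,x_j\prod(\text{others})$ gives nothing while the other factors are uncontrolled; this is the same some/all issue again. The missing idea is to bound the other factors below by $m=\min_l x_l$, and to chain from a minimizing index forward along a directed path to a maximizing one.
\begin{itemize}
\item Normalize $\max_l x_l=1$. Then $\lambda_\varepsilon\le\Lambda:=\max_i\sum_{i_2,\dots,i_k}a_{ii_2\cdots i_k}+\varepsilon n^{k-1}$ (you dropped the $\varepsilon$-term).
\item Let $a$ be the smallest positive entry of $\mathcal{A}$, and let the arc $(v,w)$ come from $a_{vi_2\cdots i_k}>0$.
\item Pull out one copy of $x_w$ from the product and bound the remaining $k-2$ factors by $m$. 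This gives $\Lambda x_v^{k-1}\ge a\,x_w m^{k-2}$.
\item Hence $x_v\le C_v m$ implies $x_w\le(\Lambda/a)C_v^{k-1}m$.
\end{itemize}
Start with constant $1$ at a minimizing index and follow a path of length at most $n-1$. This gives $\max_l x_l\le C\min_l x_l$, with $C$ depending only on $n,k,a,\Lambda$, uniformly for $\varepsilon\in(0,1]$. A limit point of the normalized positive eigenvectors of $\mathcal{A}+\varepsilon\mathcal{J}$ is then a positive eigenvector of $\mathcal{A}$, and your last two steps finish the proof.

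With this repair you get an elementary, self-contained proof. It replaces the nonlinear Perron--Frobenius theory for homogeneous order-preserving maps on which the cited result rests.
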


For a hypergraph $H=(V,E)$, the rank of $H$ is the maximum cardinality of all hyperedges in $H$, i.e. $rank(H)=max\{|e|:e \in E\}$. The set of all hyperedges of $k$-cardinality is denoted by $E_k$, and the set of all hyperedges contained the vertex  $i$ of $k$-cardinality  is denoted by $E_k(i)$.



\section{Eigenvector centrality of hypergraphs}
Recall the eigenvector centrality of graphs firstly. For a  connected graph $G=(V,E)$ with $n$ vertices,  let $A_G=(a_{ij})$ be its adjaceny matrix and $x_i$ denote the eigenvector centrality score of vertex $i$ $(i\in[n])$. We konw the eigenvector centrality of each vertex is proportional to the sum of the scores of its neighbors. Let the positive scalar $\frac{1}{\lambda} >0$ denote the proportionality constant.\\
i.e.
\[
x_{i}=\frac{1}{\lambda}\sum_{\{i, j\}\in E}x_{j}=\frac{1}{\lambda}\sum_{j=1}^{n} a_{ij}x_{j}=\frac{1}{\lambda}(A_{G}x)_{i}\quad(i\in[n]).
\]

Next, we generalize the eigenvector centrality from graphs to hypergraphs.
The centrality of a vertex is proportional to the sum of the geometric means of the scores of the other vertices in every hyperedge containing the vertex.

Let $ H$ be a connected hypergraph with $n$ vertices and $ rank(H)=m$.
Let the positive scalar $\frac{1}{\lambda} >0$ denote the proportionality constant and $\boldsymbol {x}=(x_1,\ldots,x_n)^{\mathsf{T}} $ be the centrality score vector of  all vertices of $H$, where $ x_{i}>0$. We have $\lambda>0$
and $\boldsymbol {x}$ satisfy the following equation :

\begin{align}\label{1}
x_i &= \frac{1}{\lambda}\Bigl( \sum_{\{i,i_{1}\}\in E_2}x_{i_{1}} + \sum_{\{i,i_{1},i_{2}\}\in E_3}(x_{i_{1}}x_{i_{2}})^{\frac{1}{2}} +\cdots+ \sum_{\{i,i_{1},\cdots,i_{m-1}\}
	\in E_m}(x_{i_{1}}x_{i_{2}}\cdots x_{i_{m-1}})^{\frac{1}{m-1}}  \Bigr)\\
&= \frac{1}{\lambda}\sum_{k=2}^{m}\;\sum_{e\in E_k(i)} \bigl( \boldsymbol{x}^{e\setminus\{i\}} \bigr)^{\frac{1}{k-1}},
\end{align}
where $i\in[n]$, $\boldsymbol {x}^{e\setminus \{i\}}:=\prod_{j \in {e\setminus \{i\}}}x_j$.

The cardinality set of $H$ is the set of different cardinality number of all hyperedges in $H$.
Before presenting the definitions of the adjacency tensor and eigenvector centrality, we first introduce the following definition.

\begin{defi}
	For a hypergraph $H$ with  cardinality set $\{l_1,l_2,\cdots,l_{r}\}$,
	let $s(H)= \mathrm{lcm}(l_1-1, l_2-1, \cdots,l_r-1)$.
\end{defi}

To make the degree of each variable in Equation \eqref{1} an integer, when $s(H)=s$, let $\boldsymbol {y}=(y_1,\ldots,y_n)^{\mathsf{T}} $, where $ y_{i}=\sqrt[s]{x_{i}}$ for $i\in[n]$ . Then we have

\begin{equation}\label{2}
	 y_i^{s} = \frac{1}{\lambda}\sum_{k=2}^{m}\sum_{{e}\in E_k(i)} (  \boldsymbol {y}^{e\setminus \{i\}})^{\frac{s}{k-1}}\quad(i\in[n]),
\end{equation}
where $\boldsymbol {y}^{e\setminus \{i\}}:=\prod_{j \in {e\setminus \{i\}}}y_j$.

\begin{example}
	Let $ H$ be a connected hypergraph with  $n$ vertices and cardinality set $\{2,3,4\}$, we
	have
	\begin{equation*}
		 x_i = \frac{1}{\lambda}\sum_{\{i,j\}\in E_2(i)}x_j+\frac{1}{\lambda}\sum_{\{i,j,k\}\in E_3(i)}\sqrt{x_jx_k}+\frac{1}{\lambda}\sum_{\{i,j,k,l\}\in E_4(i)}\sqrt[3]{x_jx_kx_l}\quad(i\in[n]).
	\end{equation*}
	Observe $s(H) = \mathrm{lcm}(1, 2, 3)=6$. Let $ y_{i}=\sqrt[6] x_{i} $ for $i\in [n]$. Then we have
	\begin{equation*}
		 y_i^6 = \frac{1}{\lambda}\sum_{\{i,j\}\in E_2(i)}y_j^{6}+\frac{1}{\lambda}\sum_{\{i,j,k\}\in E_3(i)}y_j^{3}y_k^{3}+\frac{1}{\lambda}\sum_{\{i,j,k,l\}\in E_4(i)}y_j^{2}y_k^{2}y_l^{2}\quad(i\in[n]).
	\end{equation*}
	
\end{example}

The definition of adjacency tensor for hypergraphs is given below.

\begin{defi} \label{defi11}
	For a hypergraph $H$ with $n$ vertices, $rank(H)=m$ and $s(H)=s$, its adjacency tensor $\mathcal{A}_{H}=(a_{i_{1}i_{2}\cdots i_{s+1}})$ is an $s+1$-order $n$-dimensional tensor.
	
	For all hyperedges $ e\in E $,
	$$
	a_{i_{1}i_{2}\cdots i_{s+1}} =\frac{[(\frac{s}{k-1})!]^{k-1}}{s!},
	$$
	where $k=|e|\in\{2,3,\cdots,m\}$,  $i_{1}i_{2}\cdots i_{s+1}$ are all
	index sequences that $i_{1}\in e$ and the multiset  $\{i_{2},\cdots ,i_{s+1}\}$ contains each element of $ e\setminus \{i_1\}$ with equal  $\frac{s}{k-1}$ times. The other postions of the tensor are zeros.
\end{defi}

Later, we prove that Equation \eqref{2} has a unique positive solution $\boldsymbol {y}$ for $\lambda=\rho(\mathcal{A}_H)$ when hypergraph $H$ is connected (see Theorem \ref{dingli}).
\begin{remark}
	For an $r$-uniform hypergraph $H$, observe $s(H)=r-1$ obviously. Then
	$$
	a_{i_{1}i_{2}\cdots i_{r}} =
	\begin{cases}
		\frac{1}{(r-1)!}, & \{i_{1},i_{2},\cdots ,i_{r}\}\in E_{{r}}(H),  \\
		0, & \text{otherwise}.
	\end{cases}
	$$
\end{remark}
 And the necessary and sufficient condition for the weak irreducibility of  the adjacency tensor can be given below.

 \begin{thm} \label{liantong}
 	For a hypergraph  $H$, the adjacency tensor $\mathcal{A}_H$  is weakly irreducible if and only if $H$ is connected.
 \end{thm}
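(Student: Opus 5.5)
We plan to use Lemma \ref{connect}, which reduces the claim to showing that the associated directed graph $D_{\mathcal{A}_H}$ is strongly connected exactly when $H$ is connected. The first step is to describe the arc set of $D_{\mathcal{A}_H}$ explicitly from Definition \ref{defi11}. An entry $a_{i_1 i_2\cdots i_{s+1}}$ is nonzero exactly when there is a hyperedge $e$ with $|e|=k$ such that $i_1\in e$ and the multiset $\{i_2,\dots,i_{s+1}\}$ consists of each vertex of $e\setminus\{i_1\}$ repeated $\frac{s}{k-1}$ times. This count is a positive integer because $k-1$ divides $s=s(H)$, so such index sequences exist for every hyperedge. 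Hence
\[
E(D_{\mathcal{A}_H})=\{(i,j): \text{there is } e\in E \text{ with } i,j\in e,\ i\neq j\}.
\]
Every vertex of $e\setminus\{i_1\}$ actually appears among $i_2,\dots,i_{s+1}$, and no vertex outside $e$ appears. The arc relation is therefore symmetric, and $D_{\mathcal{A}_H}$ is just the 2-section (clique expansion) of $H$ with each edge replaced by two opposite arcs. We assume the hypergraph has no loops, that is, every edge has at least two vertices, consistent with $k\in\{2,\dots,m\}$.

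\textbf{Sufficiency.} Suppose $H$ is connected. For distinct vertices $u,v$, there is a sequence $u=v_0,v_1,\dots,v_t=v$ in which consecutive vertices lie in a common hyperedge. By the description above, each $(v_{r-1},v_r)$ is an arc, so we obtain a directed path from $u$ to $v$. Thus $D_{\mathcal{A}_H}$ is strongly connected, and $\mathcal{A}_H$ is weakly irreducible by Lemma \ref{connect}.

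\textbf{Necessity.} Conversely, suppose $D_{\mathcal{A}_H}$ is strongly connected. A directed path $u\to v$ is a sequence of arcs, and each arc joins two vertices lying in a common hyperedge, so it yields a walk in $H$ and $H$ is connected. Alternatively, we could argue the contrapositive directly from the definition of weak reducibility. Take $I$ to be the vertex set of one connected component of a disconnected $H$. Then any nonzero entry with $i_1\in I$ comes from a hyperedge $e\ni i_1$, which lies inside the component, so $\{i_2,\dots,i_{s+1}\}\subseteq e\subseteq I$, and $\mathcal{A}_H$ is weakly reducible.

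The only step needing care is the arc-set description. In particular, we must verify that the multiplicity $\frac{s}{k-1}$ is a positive integer, so that every $j\in e\setminus\{i_1\}$ genuinely occurs, and that no index outside $e$ occurs. Both facts follow from $k-1\mid s$ and from the definition of the tensor entries. The rest is routine.
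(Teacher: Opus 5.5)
Your proposal is correct and follows the paper's route. Like the paper, you use Lemma \ref{connect} to reduce the claim to strong connectivity of $D_{\mathcal{A}_H}$, then turn hypergraph paths into directed paths and back. Your explicit description of the arc set adds something the paper only asserts: because $k-1 \mid s$, every vertex of $e\setminus\{i_1\}$ actually occurs among $i_2,\dots,i_{s+1}$ and no other vertex does, which makes the arc relation symmetric and justifies the paper's appeal to ``the definition of $D_{\mathcal{A}_H}$''. Your alternative necessity argument, taking $I$ to be a connected component and applying the definition of weak reducibility directly, is also valid.
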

\begin{proof}
	By Lemma \ref{connect}, we only need to prove $D_{\mathcal{A}_H}$ is strongly connected if and only if $H$ is connected. Notice that $V(H)=V(D_{\mathcal{A}_H})$.
	
	When $H$ is connected. For $i, j \in V(H)$, $i \neq j$, there exists a path $P_s = i_1 e_1 i_2 e_2 \cdots i_s e_s i_{s+1}$ in $H$, where $i = i_1, j = i_{s+1}$. From the definition of $D_{\mathcal{A}_H}$, there exists a directed arc $(i_k, i_{k+1})$from $i_k$ to $i_{k+1}$ for $k\in [s]$ in $D_{\mathcal{A}_H}$, which means that there is a directed path from $i$ to $j$ in $D_{\mathcal{A}_H}$. Then $D_{\mathcal{A}_H}$ is strongly connected.
	
	When $D_{\mathcal{A}_H}$ is strongly connected. For $i, j \in V(D_{\mathcal{A}_H})$, $i \neq j$, there exists a directed path $\widehat{P}_s = (i_1, i_2)(i_2, i_3)\cdots(i_s, i_{s+1})$ in $D_{\mathcal{A}_H}$, where $i_1=i,i_{s+1}= j$. From the definition of $D_{\mathcal{A}_H}$, we know there exists  some hyperedges $e_1,e_2,\cdots,e_s$ in $H$
	that $\{i_k, i_{k+1}\}\subseteq e_k$ for $k\in [s]$ in $H$, which means that there is a directed path from  $i$ to $j$ in $H$. Then $H$ is connected.
\end{proof}

\begin{thm}\label{dingli}
	Let $H$ be a connected hypergraph with $n$ vertices, $rank(H)=m$ and $s(H)=s$.
	Then for $\lambda=\rho (\mathcal{A}_H)$, the Equation \eqref{2} has a unique positive solution  $\boldsymbol  {y}$ (up to its scalar multiples).
\end{thm}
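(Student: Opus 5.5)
The plan is to show that Equation \eqref{2} is exactly the eigen-equation $\mathcal{A}_H\boldsymbol{y}^{s}=\lambda\boldsymbol{y}^{[s]}$ of the $(s+1)$-order tensor $\mathcal{A}_H$ from Definition \ref{defi11}. Once this is established, Theorem \ref{liantong} gives weak irreducibility and Lemma \ref{perron} gives existence and uniqueness of the positive eigenvector for $\rho(\mathcal{A}_H)$.

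\textbf{Step 1: compute the $i$-th component.} I would compute
\[
(\mathcal{A}_H\boldsymbol{y}^{s})_i=\sum_{i_2,\ldots,i_{s+1}} a_{i i_2\cdots i_{s+1}}\,y_{i_2}\cdots y_{i_{s+1}}
\]
and sort the nonzero terms by hyperedges. A nonzero entry $a_{i i_2\cdots i_{s+1}}$ comes from a hyperedge $e$ with $i\in e$ and $|e|=k$. The multiset $\{i_2,\ldots,i_{s+1}\}$ then contains each vertex of $e\setminus\{i\}$ exactly $\frac{s}{k-1}$ times.

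Two points need checking here. First, a given index sequence determines $e$ uniquely, because $e=\{i\}\cup\{i_2,\ldots,i_{s+1}\}$; so different hyperedges do not overlap in the sum, provided $H$ has no repeated hyperedges. Second, each such sequence contributes the same monomial $(\boldsymbol{y}^{e\setminus\{i\}})^{\frac{s}{k-1}}$.

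\textbf{Step 2: count the sequences.} The number of sequences $(i_2,\ldots,i_{s+1})$ realizing this multiset is the multinomial coefficient
\[
\frac{s!}{\bigl[(\frac{s}{k-1})!\bigr]^{k-1}}.
\]
This cancels exactly against the entry value $a=\frac{[(\frac{s}{k-1})!]^{k-1}}{s!}$. Hence
\[
(\mathcal{A}_H\boldsymbol{y}^{s})_i=\sum_{k=2}^{m}\sum_{e\in E_k(i)}(\boldsymbol{y}^{e\setminus\{i\}})^{\frac{s}{k-1}},
\]
and Equation \eqref{2} is equivalent to $\mathcal{A}_H\boldsymbol{y}^{s}=\lambda\boldsymbol{y}^{[s]}$. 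Note that $\frac{s}{k-1}$ is an integer because $s=\mathrm{lcm}$ of all $|e|-1$, so the tensor is well defined.

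\textbf{Step 3: apply the earlier results.} $\mathcal{A}_H$ is nonnegative, and by Theorem \ref{liantong} it is weakly irreducible since $H$ is connected. Lemma \ref{perron} then says $\rho(\mathcal{A}_H)$ is an eigenvalue with a positive eigenvector, unique up to scaling. Through the equivalence in Steps 1–2, this is exactly the required unique positive solution $\boldsymbol{y}$ of Equation \eqref{2} with $\lambda=\rho(\mathcal{A}_H)$.

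\textbf{Main obstacle.} The only real work is the bookkeeping in Steps 1–2: the multinomial count, and the claim that index sequences determine hyperedges uniquely. I would verify this first on a small case, such as the example with cardinality set $\{2,3,4\}$ and $s=6$, before writing the general argument.
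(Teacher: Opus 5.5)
Your proposal is correct and follows the paper's proof essentially step for step. You expand $(\mathcal{A}_H\boldsymbol{y}^{s})_i$ and let the multinomial count $s!/[(\frac{s}{k-1})!]^{k-1}$ cancel the entry value, which turns Equation \eqref{2} into $\mathcal{A}_H\boldsymbol{y}^{s}=\lambda\boldsymbol{y}^{[s]}$, and you then conclude with Theorem \ref{liantong} and Lemma \ref{perron}, exactly as the paper does. Your explicit check that each index sequence $(i,i_2,\ldots,i_{s+1})$ determines its hyperedge, so that the sum splits cleanly over $E_k(i)$, is a bookkeeping detail the paper leaves implicit and is a welcome addition.
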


\begin{proof}
	Let $\mathcal{A}_{H}$ be the adjacency tensor of $H$, we have
\begin{align*}	
	\left(\mathcal{A}_{H}  \boldsymbol {y}^{s}\right)_{i}
&= \sum_{i_{2},\ldots,i_{s+1}=1}^{n} a_{ii_{2}\cdots i_{s+1}} y_{i_{2}} \cdots y_{i_{s+1}} \\
&=\sum_{k=2}^{m}\sum_{{e}\in E_k(i)} \frac{[(\frac{s}{k-1})!]^{k-1}}{s!}\frac{s!}{[(\frac{s}{k-1})!]^{k-1}} (  \boldsymbol {y}^{e\setminus \{i\}})^{\frac{s}{k-1}}\\
&=\sum_{k=2}^{m}\sum_{{e}\in E_k(i)} (
\boldsymbol {y}^{e\setminus \{i\}})^{\frac{s}{k-1}}.
\end{align*}
Thus, for $ i \in[n]$, Equation \eqref{2} can be expresssed as

\begin{equation}\label{3}
\lambda y_i^{s} = \left(\mathcal{A}_{H}  \boldsymbol{y}^{s}\right)_{i} ,
\end{equation}
i.e.
\begin{align}\label{4}
\mathcal{A}_{H} \boldsymbol {y} ^{s}= \lambda \boldsymbol {y} ^{[s]},
\end{align}
where $\boldsymbol {y}^{[s]}=(y_1^{s},y_2^{s},{\cdots},y_n^{s})^{\mathsf{T}}$.

%
By Theorem \ref{liantong}, we have the tensor $\mathcal{A}_H$ is weakly irreducible.
And by Lemma \ref{perron}, the Equation (\ref{4}) has a unique positive solution $\boldsymbol {y}$ (up to its scalar multiples) corresponding to $\lambda=\rho (\mathcal{A}_H)$.
\end{proof}

\begin{defi}
For a connected hypergraph $H$ with $s(H)=s$,  the unique positive solution $\boldsymbol{y}$ of
$\mathcal{A}_{H}\boldsymbol{y}^{s}=\rho(\mathcal{A}_{H})\boldsymbol{y}^{[s]}$
is called the eigenvector centrality of $H$, denoted as HEC $(\|\boldsymbol  {y}\|_{s+1}=1)$.
\end{defi}

\begin{remark}
Given that $\boldsymbol {x}=\boldsymbol {y}^{[s]}$, and $\boldsymbol {x}, \boldsymbol{y}$ are positive, the rankings of vertices in hypergraphs induced by $\boldsymbol {x}$ and $ \boldsymbol{y}$ are identical.
Hence, we adopt $ \boldsymbol{y}$ as the eigenvector centrality of hypergraphs.
Noting that $ \boldsymbol{y}$ satisfies Equation \eqref{2}, while $\boldsymbol {x}$ solves Equation \eqref{1}.
\end{remark}

\section{Numerical analysis}
\subsection{Algorithm for computing HEC}
Based on the theoretical framework established in Section 3, we now present Algorithm \ref{A1} for computing the proposed eigenvector centrality (HEC) for connected hypergraphs.
Our algorithm employs the ZQW-algorithm to compute tensor eigenvalues and eigenvectors \cite{zhou2013efficient}.

\begin{algorithm}[h]
    \small
    \renewcommand{\algorithmicrequire}{\textbf{Input:}}
    \renewcommand{\algorithmicensure}{\textbf{Output:}}
    \caption{The eigenvector centrality for  hypergraphs (HEC)}
    \label{A1}
    \begin{algorithmic}[1]
        \Require A connected hypergraph $H$, 
        and cardinality set $\{l_1,l_2,\cdots,l_{r}\}$.
        \Ensure The eigenvector centrality for  hypergraph $\boldsymbol {y}$.

        \State Compute $s=s(H)=lcm(l_1-1,l_2-1,\cdots,l_{r}-1)$.
         \State  Construct the adjacency tensor $\mathcal{A}_H$ of order $s+1$.
          \State Choose an $n$-dimension vector ${\boldsymbol {y}}^{(0)} > 0$.
 Let $\mathcal{B}_{H} = \mathcal{A}_{H} + \mathcal{I}$, where $\mathcal{I}$ is the identity tensor.
 Calculate $\boldsymbol {x}^{(0)} =\mathcal{B}_{H}{(\boldsymbol {y}^{(0)})^{s}}$.
 Set $t=1$.
        \State Compute
        \Statex   $\boldsymbol {y}^{(t)}= \frac{{\boldsymbol {x}^{(t-1)}}^{[\frac{1}{s}]}   }
        { ||{\boldsymbol {x}^{(t-1)}}^{[\frac{1}{s}]}||} ,
        {\boldsymbol {x}}^{(t)}={\mathcal{B}}_{H}{({\boldsymbol {y}}^{(t)})^{s}}$,
        \Statex  $\underline{\lambda}_t =\min_{y_i^{(t)}> 0 } \frac{x_i^{(t)}}{(y_i^{(t)})^{s}}  $ ,
                     $\overline{\lambda}_t = \max_{y_i^{(t)}> 0 } \frac{x_i^{(t)}}{(y_i^{(t)})^{s}}$.
        \State If $\underline{\lambda}_t = \overline{\lambda}_t$,
        the algorithm stops and then $\boldsymbol {y}$ is the eigenvector centrality for hypergraph $H$;
        Otherwise, replace $t$ by $t+1$ and go to step 4.
 \end{algorithmic}
\end{algorithm}
\subsection{Baselines}
\textbf{Degree Centrality (DC)} measures the importance of a vertex by the number of neighbors, where a vertex is defined as a neighbor of another vertex if they are contained in a same hyperedge \cite{liu2024survey}. The DC of vertex $i$ is given by
\begin{equation}
	k_i = |N(i)|,
	\label{}
\end{equation}
where $N(i)$ is the set containing all neighbors of vertex $i$ .

\textbf{Hyper Degree Centrality (HDC)} measures the importance of a vertex by counting the number of hyperedges containing the vertex, which has a greater chance of spreading the influence widely \cite{liu2024survey}. The HDC of a vertex $i$ is given by
\begin{equation}
	k_i^H = |E(i)|,
	\label{}
\end{equation}
where $E(i)$ is the hyperedge set containing vertex $i$.

\textbf{Vector Centrality (VC)} evaluates the centrality of vertices by first characterizing the centrality of the hyperedges \cite{kovalenko2022vector}. Specifically, the hypergraph is projected into a line graph. Then the eigenvector centrality values of each hyperedge is computed and evenly distributed to each vertex within the hyperedge. The VC of a vertex $i$ is given by
\begin{equation}
	c_i = \sum_{e\in E(i)}c_{i,e},
	\label{}
\end{equation}
where $E(i)$ is the hyperedge set containing vertex $i$ and $c_{i,e}$ is the centrality value of vertex $i$ distributed from hyperedge $e$.

\textbf{Clique-expansion Centrality (CC)} evaluates the centrality of vertices by computing the eigenvector centrality of the clique expansion graph of the hypergraph. The clique expansion graph of a hypergraph is formed by translating each hyperedge as a complete subgraph (clique) \cite{agarwal2006higher,bretto2013hypergraph,zien2002multilevel}. The HDC of all vertices are solved by equations:
\begin{equation}
 \boldsymbol{x} = \frac{1}{\rho(A)}A\boldsymbol{x},
	\label{}
\end{equation}
where $A$ is the adjacency matrix of the clique expansion graph of the hypergraph.

\subsection{Datasets}
In this section, we introduce the hypergraph datasets used in the subsequent experiments, which are empirical data from multiple domains.

The Email-Enron dataset \cite{Benson-2018-simplicial} has
$143$ vertices and $1542$ hyperedges, where  vertices are  email users and a hyperedge is composed by all users who are included in the same email.

 The Restaurant dataset \cite{amburg2022diverse} has
 $565$ vertices and $601$ hyperedges, where vertices are Yelp users and a hyperedge is composed by users reviewing on the same restaurant.

 The Geometry dataset \cite{amburg2022diverse} has
 $580$ vertices and $1193$ hyperedges, where vertices are MathOverflow users and a hyperedge is  a group of users who answered the same questions related to geometry.

 The Roget dataset \cite{Batagelj2004}, obtained from the Pajek dataset, has
 $1010$ vertices and $997$ hyperedges. vertices in which correspond to different categories in Peter Mark Roget's 1879 edition of the English Thesaurus, while hyperedges are cross-referencing relationships between vocabulary in different categories.

 The Music-blues dataset \cite{ni2019justifying} has
 $1106$ vertices and $694$ hyperedges, where vertices are Amazon users and a hyperedge is composed by all users commented on the same type of music-blues.

 The Film-ratings dataset was initially a bipartite graph from the Koblenz Network Collection \cite{kunegis2013konect}, which is transformed into a hypergraph based on the relationships between vertices. It has
 $2064$ vertices and $1399$ hyperedges, where vertices are movies and a hyperedge is composed by all movies rated by a user.

In the subsequent experiments, we only consider the hypergraph with hyperedges composed of $2$, $3$ and $4$ vertices in the following dataset.	

\subsection{Experiments}
This paper use a tensor representation to study hypergraphs, which ensures a bijective mapping between the hypergraph and its tensor representation.
To more clearly demonstrate that the centrality proposed in this paper preserves the high-order structure of hypergraphs while also considering the influence of adjacent vertices, we conduct the following experiments.

Figure \ref{fig1} presents a non-uniform sunflower hypergraph consisting of three hyperedges and its clique expansion graph.
We calculated the scores of vertices in Figure \ref{fig1}, under five different centrality measures (DC,HDC,CC,VC,HEC),
and the results are shown in Table \ref{table1}.

\begin{figure}[H]
\centerline{\includegraphics[scale=0.4]{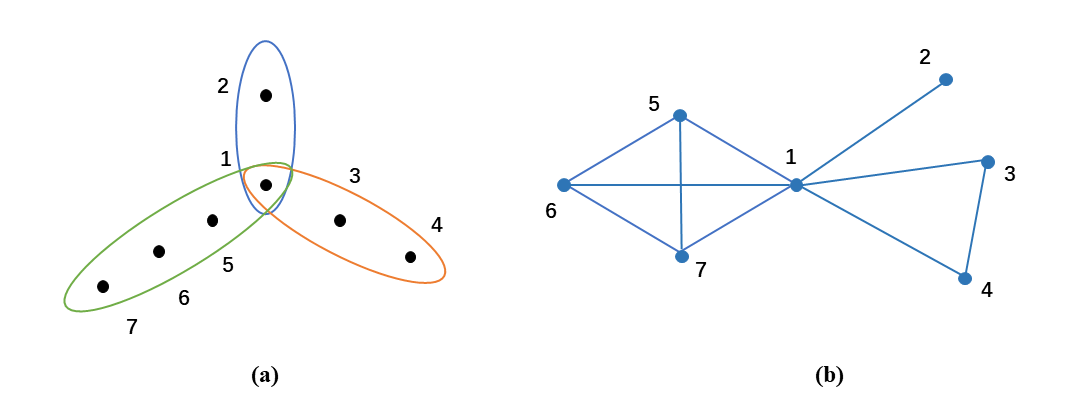}}
\caption{(a) is a non-uniform sunflower hypergraph with three hyperedges and (b) is its clique expansion graph.}
\label{fig1}
\end{figure}

\begin{table*}[!t]
    \tiny
    \centering
    \caption{The centrality scores of vertices in Figure \ref{fig1}.}
    \label{table1}
    \begin{tabular}{lcccccc}  
    \toprule
        \textbf{Vertex}  & \textbf{DC}  & \textbf{HDC} & \textbf{CC}& \textbf{VC} & \textbf{HEC}\\ \midrule
        \textbf{$1$}     & 6	 &  3  &  0.9571 &   0.9993 &   0.8451                 \\
        \textbf{$2$}     & 	1    &  1  &  0.2851 &   0.4612 &     0.7930        \\
        \textbf{$3,4$}  &  2     &  1  &  0.4061 &   0.3075 &      0.7440              \\
         \textbf{$5,6,7$} & 3	 &  1     &  0.7053 &  0.2306  &   0.6981                 \\
    \bottomrule
    \end{tabular}
\end{table*}

From Table \ref{table1}, the hub vertex $1$ attains the highest centrality score under all centrality measures.
Under HDC, all vertices located in the petals of the sunflower receive identical scores.
In contrast, both DC and CC assign higher scores to vertices situated in larger petals, whereas VC and HEC exhibit the opposite trend, favoring vertices in smaller petals with higher centrality values.
Exactly, DC, HDC and CC fail to capture the varying sizes of hyperedges, thereby losing critical high-order structural information.

Both HEC and CC incorporate the high-order structure of hypergraphs, while simultaneously exhibiting a preference for assigning higher scores to smaller groups.
Small groups typically imply stronger interactions and more efficient collaboration.
In security or information diffusion networks, such small groups may serve as critical control points or origins of propagation. For example, within an organization, a small decision-making team could exert more practical influence than a large department.

However, a closer inspection reveals that VC assigns excessively high scores to the central vertex, while vertices on other edges receive scores far lower than that of the center.
In contrast, HEC distributes scores more equitably across vertices, as it incorporates not only the high-order structure of hypergraphs but also the mutual interactions among vertices.
Consequently, HEC emerges as an effective tool for analyzing vertex importance in non-uniform hypergraphs.

Below, we conduct a systematic experimental analysis and comparison of five hypergraph centrality methods (DC, HDC, CC, VC, HEC) on the six real-world hypergraphS mentioned above. Specifically, we compute the scores of the five centrality measures on each hypergraph and plot the corresponding matrix scatter plots to visually illustrate the correlations and distributional differences among the different centrality measures.

\begin{figure}[!t]
  \centering
   \subfloat[Email-Enron.]{%
    \begin{minipage}[t]{0.44\textwidth}
      \centering
      \includegraphics[width=\linewidth]{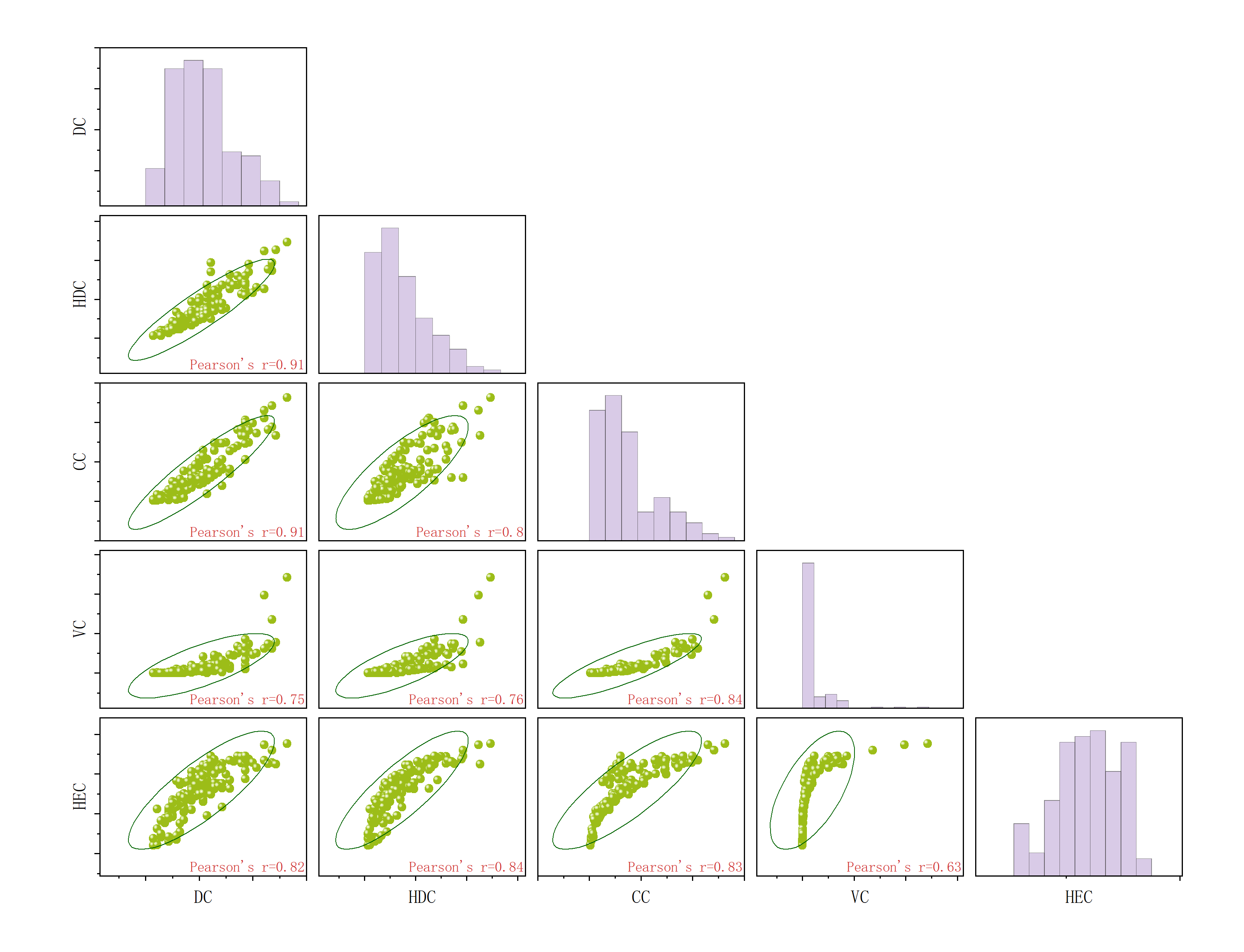}
    \end{minipage}%
  }
  \subfloat[Restaurant.]{%
    \begin{minipage}[t]{0.44\textwidth}
      \centering
      \includegraphics[width=\linewidth]{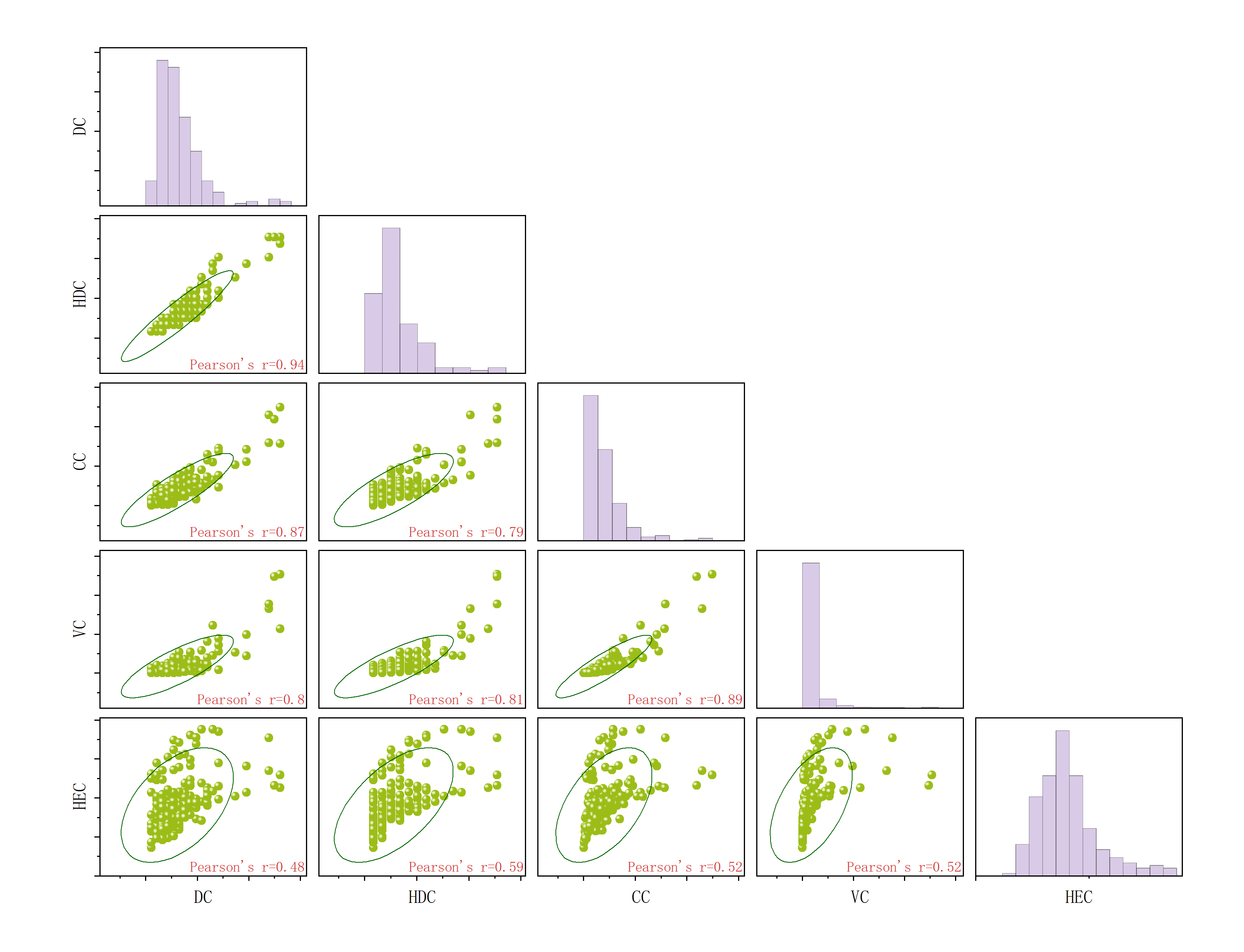}
    \end{minipage}%
  }
    \hfill
  \subfloat[Geometry.]{%
    \begin{minipage}[t]{0.44\textwidth}
      \centering
      \includegraphics[width=\linewidth]{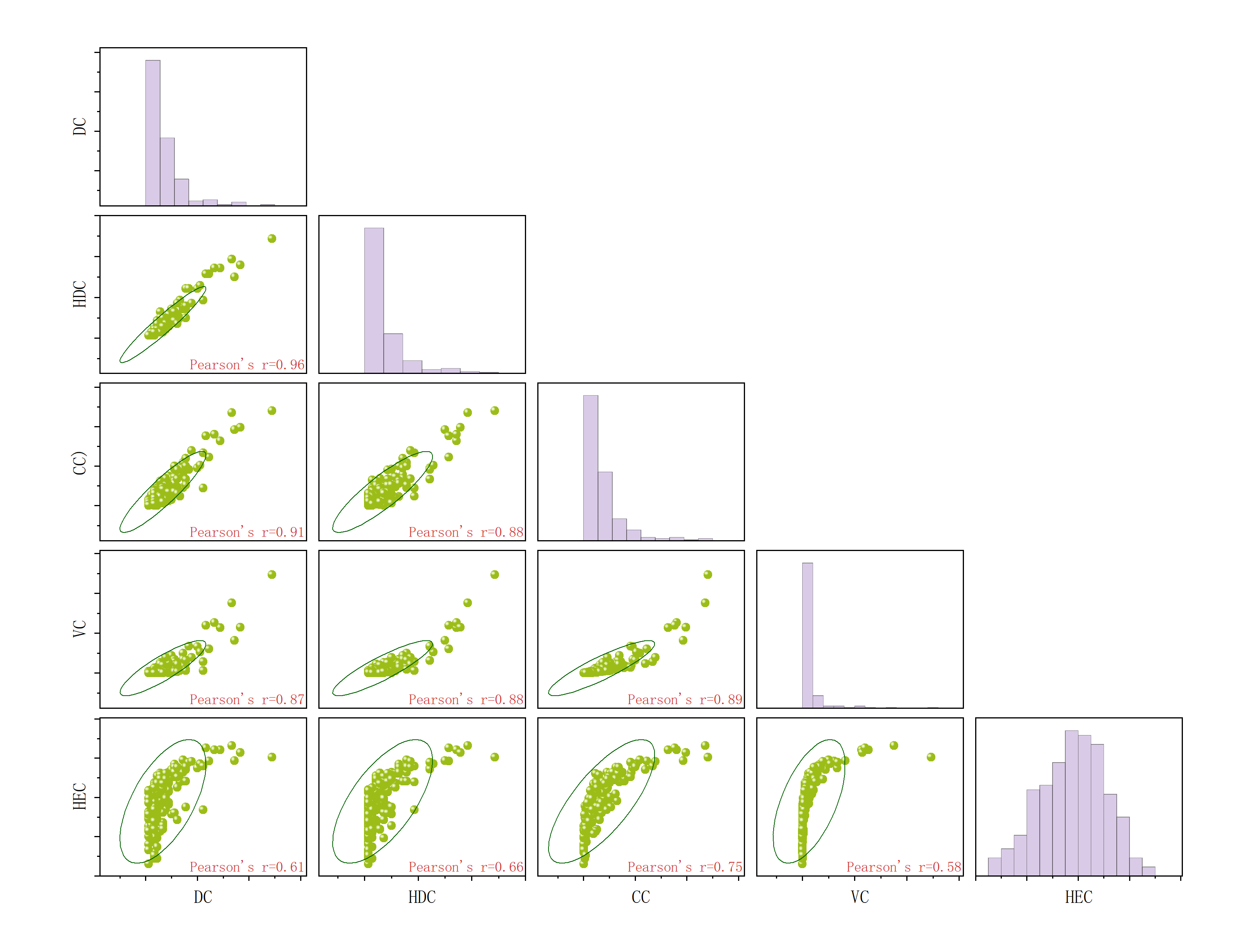}
    \end{minipage}%
  }
    \subfloat[Roget.]{%
    \begin{minipage}[t]{0.44\textwidth}
      \centering
      \includegraphics[width=\linewidth]{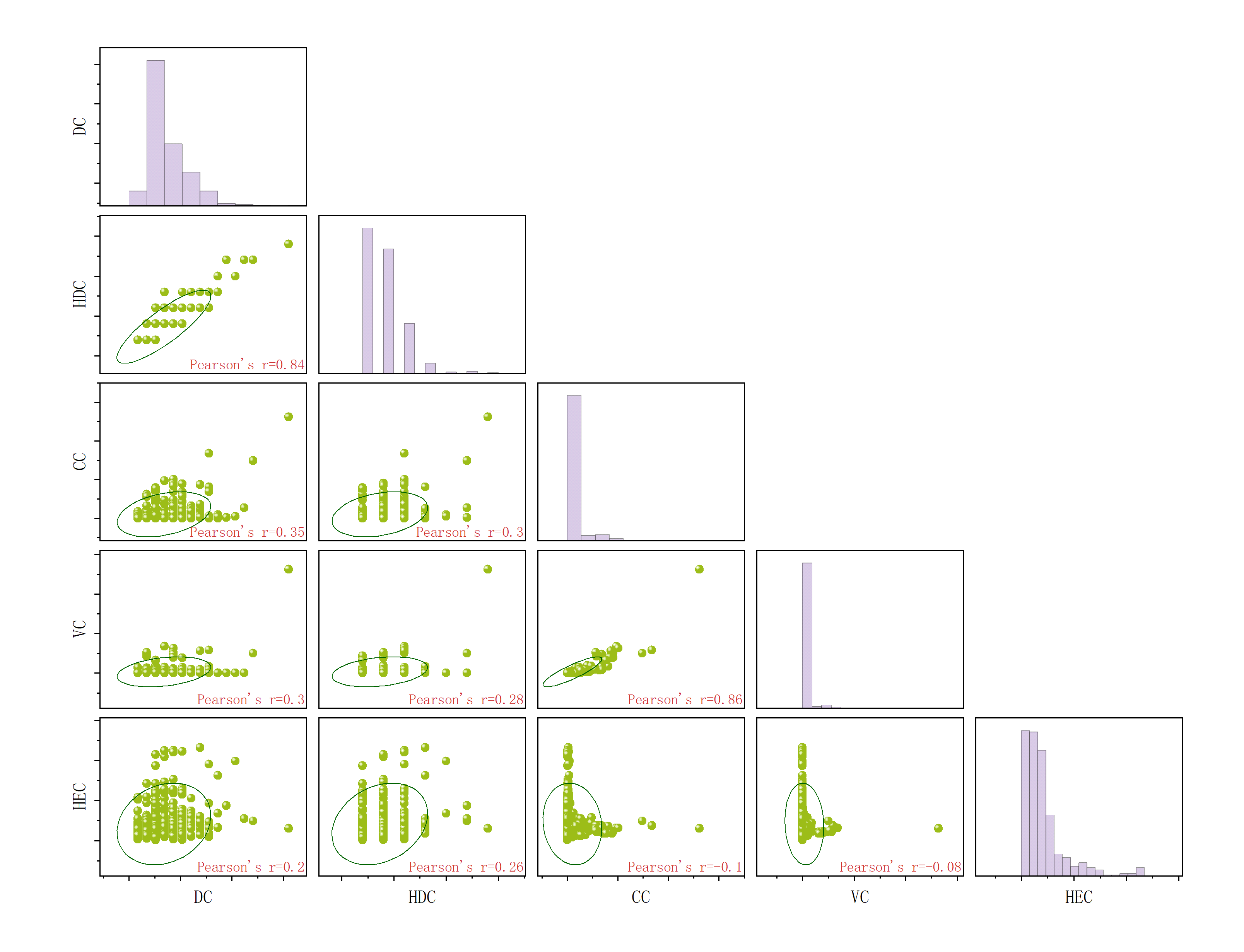}
    \end{minipage}%
  }
    \hfill
  \subfloat[Music-blues.]{%
    \begin{minipage}[t]{0.44\textwidth}
      \centering
      \includegraphics[width=\linewidth]{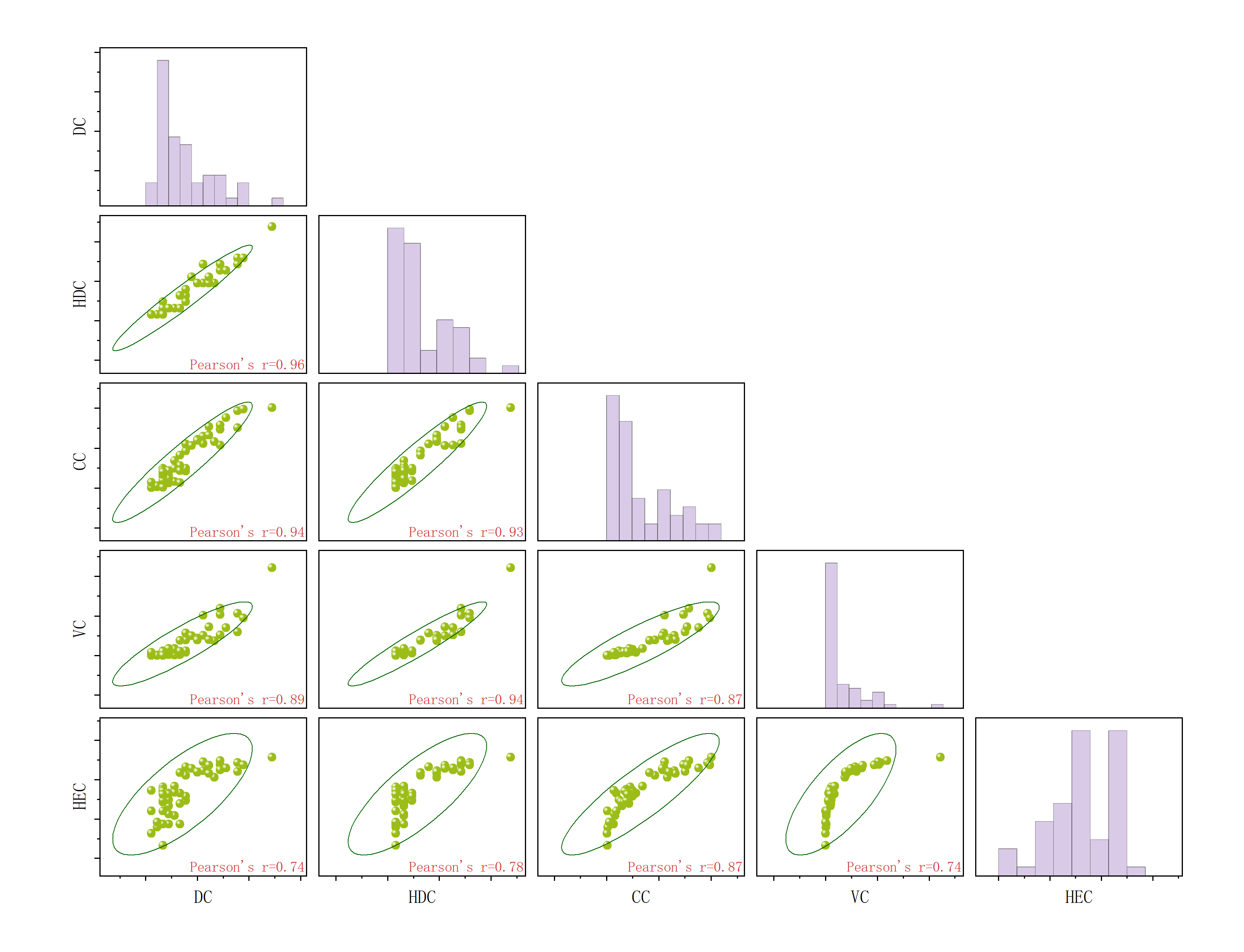}
    \end{minipage}%
  }
    \subfloat[Film-ratings.]{%
    \begin{minipage}[t]{0.44\textwidth}
      \centering
      \includegraphics[width=\linewidth]{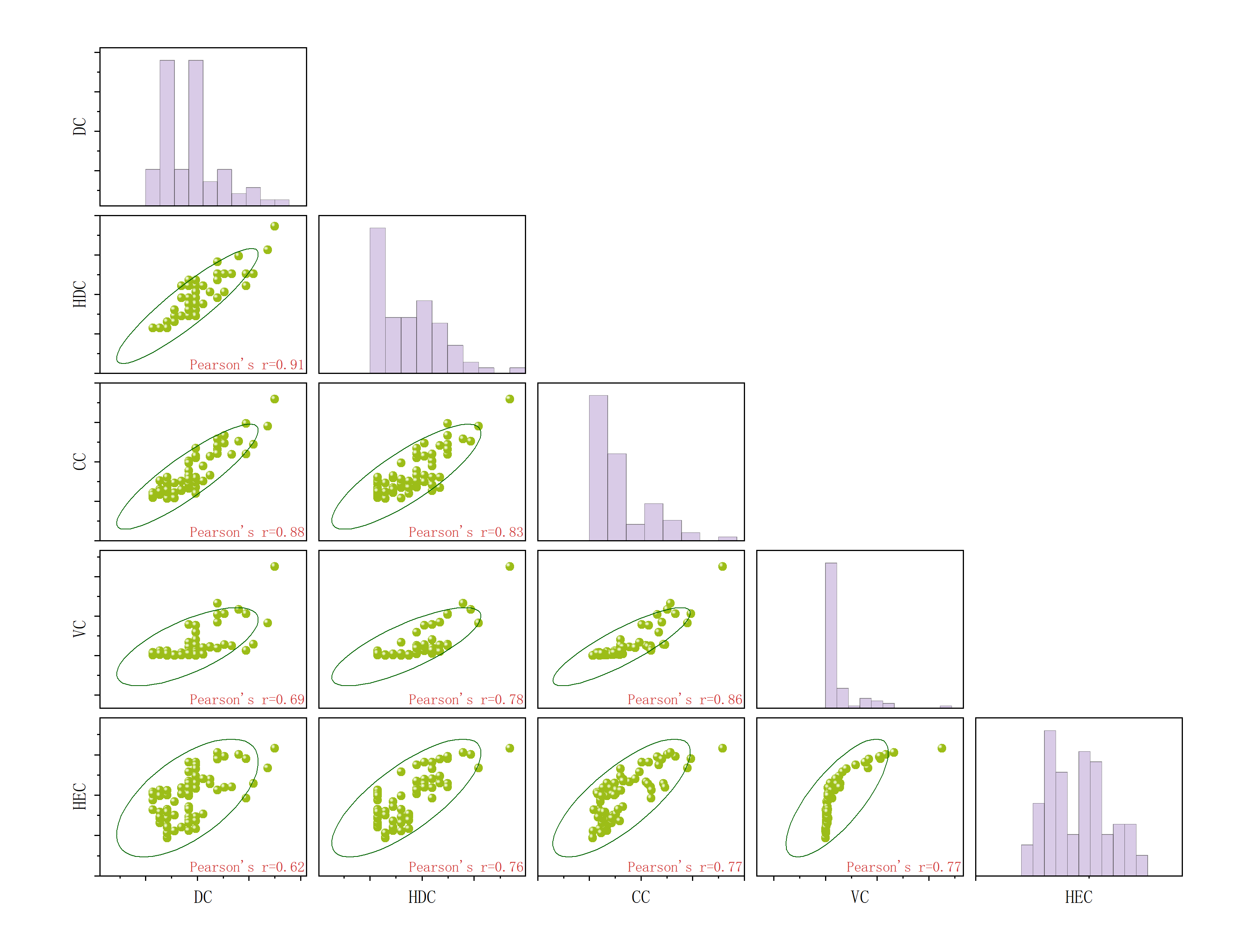}
    \end{minipage}%
  }
  \caption{The matrix scatter plots for the six hypergraphs under the five centrality measures, with correlations quantified by Pearson correlation coefficients.}
  \label{fig2}
\end{figure}

From Figure \ref{fig2}, we have HEC exhibits moderate to strong correlations with the four classical centrality measures across different networks, indicating that the information captured by HEC shares some commonality with existing centralities but is not entirely consistent.
Among them, the correlation between HEC and VC is the lowest in most hypergraphs (except for Geometry), dropping to just $0.08$ in Roget, suggesting that their linear association is the weakest and reflects different focal points. HEC and HDC maintain moderate to strong correlations across all six hypergraphs ($0.26-0.84$), with particularly notable values in Email-Enron and Music-blues, indicating that both measures rely heavily on a vertex's neighbors.
The correlation between HEC and DC, in contrast, exhibits strong hypergraph dependence, ranging from only $0.20$ in Roget to as high as $0.82$ in Email-Enron.
Notably, Roget stands out as an outlier across all centrality pairs: the correlations between HEC and the four classical centralities are extremely low ($0.08-0.26$), while the classical centralities themselves remain highly correlated with each other (e.g., DC-HDC and CC-VC $\geq$ $0.84$), indicating that HEC provides a completely different ranking of vertex importance in this hypergraph, capturing unique structural information.

Among the classical centrality measures, strong correlations are commonly observed.
DC-HDC and CC-VC exhibit strong correlations across all networks.
In contrast, the correlations of DC-CC and HDC-CC show considerable variation: they reach $0.91, 0.94,$ and $0.87$ in Geometry, Music-blues, and Restaurant, respectively, but drop sharply to $0.35$ and $0.30$ in Roget, indicating that the relationships among some centrality measures are heavily dependent on hypergraph topology.
In summary, as a novel centrality measure, HEC overlaps with classical metrics to some extent in most hypergraphs.
However, its weak correlation with VC and its complete divergence from traditional centralities in hypergraphs such as Roget demonstrate that HEC can provide complementary information beyond what classical measures capture, offering unique structural discriminability in assessing vertex importance in hypergraphs.

In complex networks, robustness is typically defined as the network's ability to maintain its fundamental functions and structural integrity when subjected to random failures or malicious attacks \cite{cicchini2024robustness}.
Network robustness is commonly quantified by the relative size of the largest connected component (LCC). The network is considered more robust if it can maintain a larger LCC throughout an attack process; consequently, a faster decline in the LCC signifies poorer robustness \cite{lou2022learning}.
We next conduct experiments on the six hypergraphs. By removing vertices in descending order according to several centrality rankings, we obtain the LCC decay curves shown in Figure \ref{fig3}, where the curve labeled ``random" corresponds to the baseline of random vertex removal.

\begin{figure}[!t]
  \centering
   \subfloat[Email-Enron.]{%
    \begin{minipage}[t]{0.35\textwidth}
      \centering
      \includegraphics[width=\linewidth]{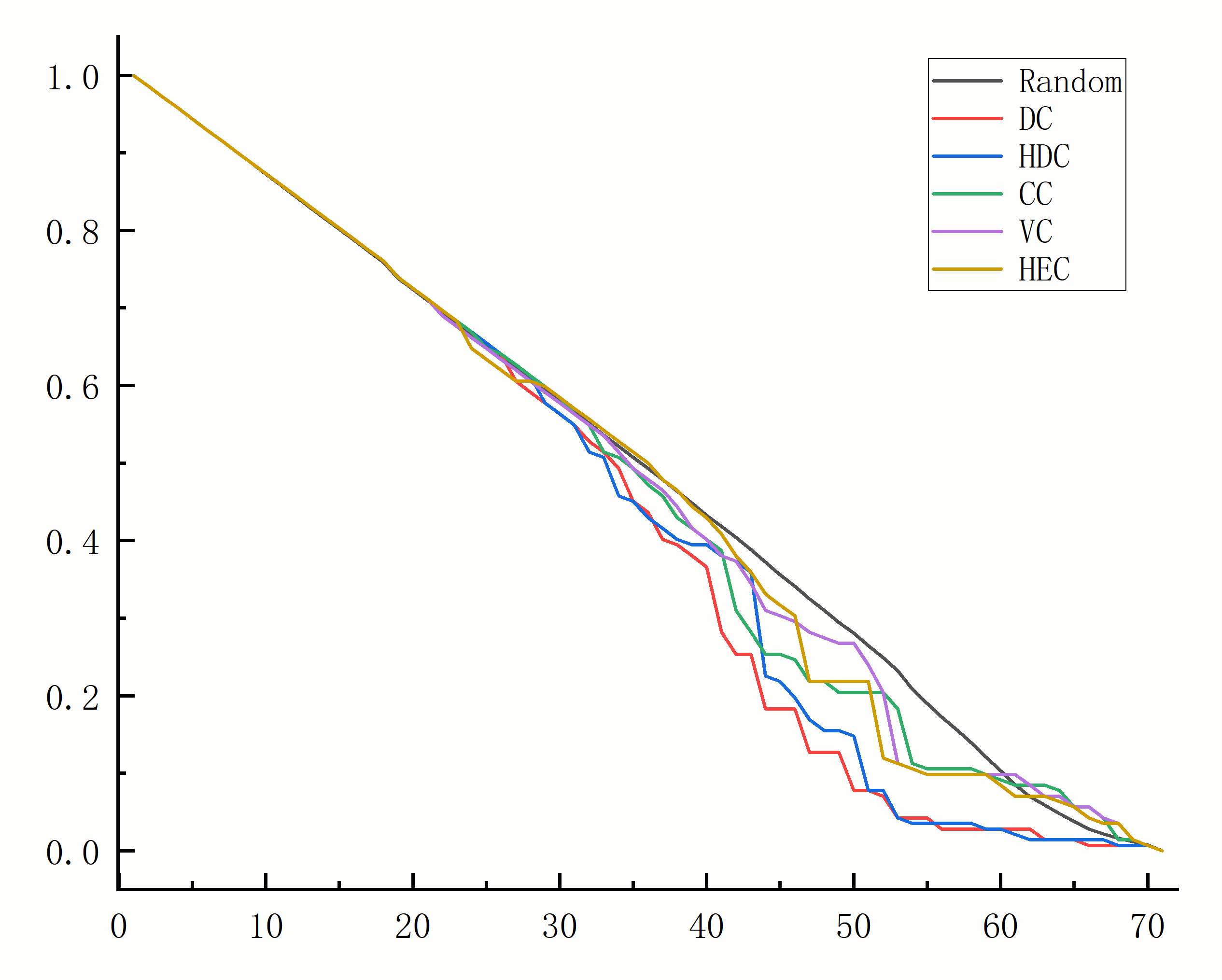}
    \end{minipage}%
  }
  \subfloat[Restaurant.]{%
    \begin{minipage}[t]{0.35\textwidth}
      \centering
      \includegraphics[width=\linewidth]{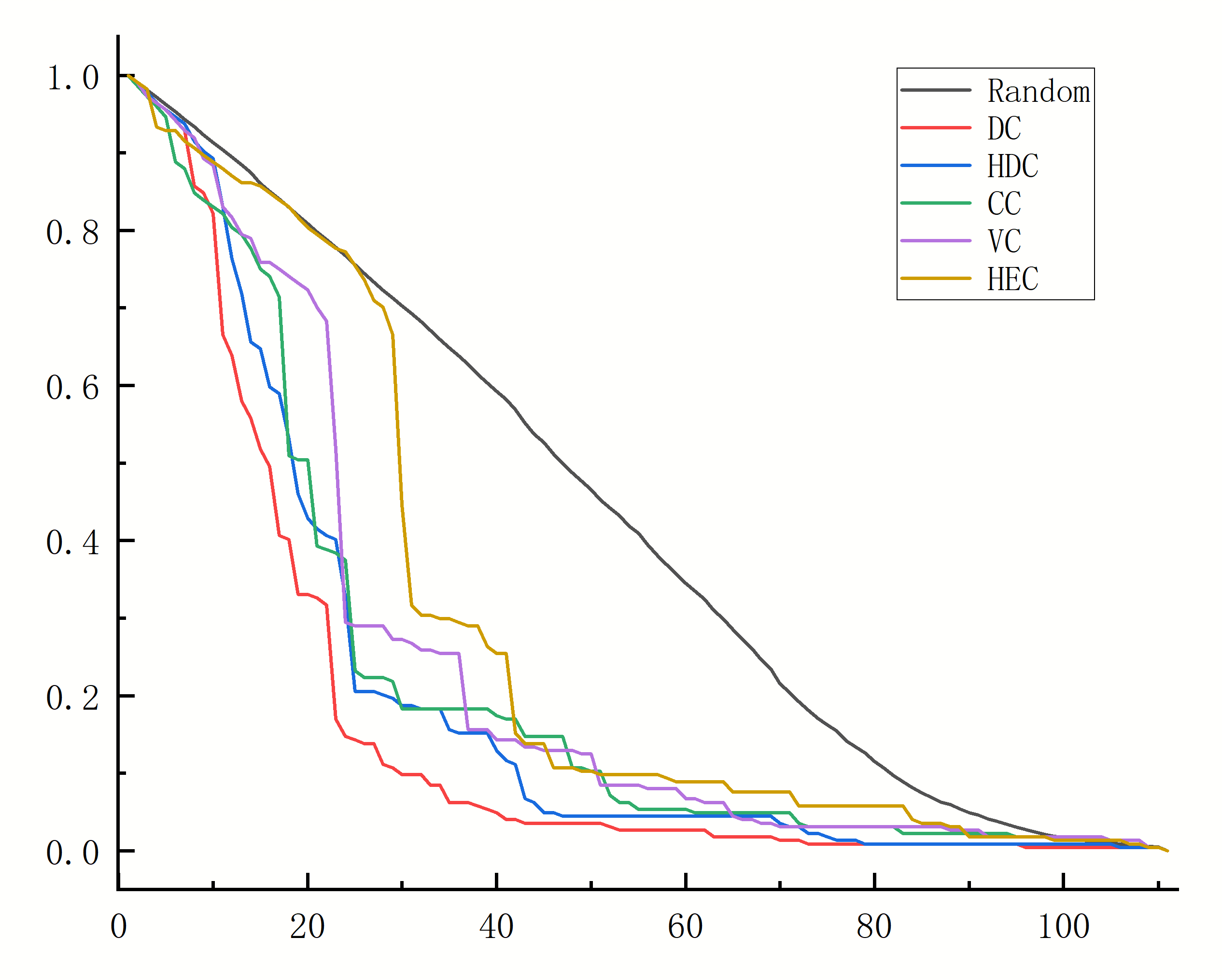}
    \end{minipage}%
  }
  \subfloat[Geometry.]{%
    \begin{minipage}[t]{0.35\textwidth}
      \centering
      \includegraphics[width=\linewidth]{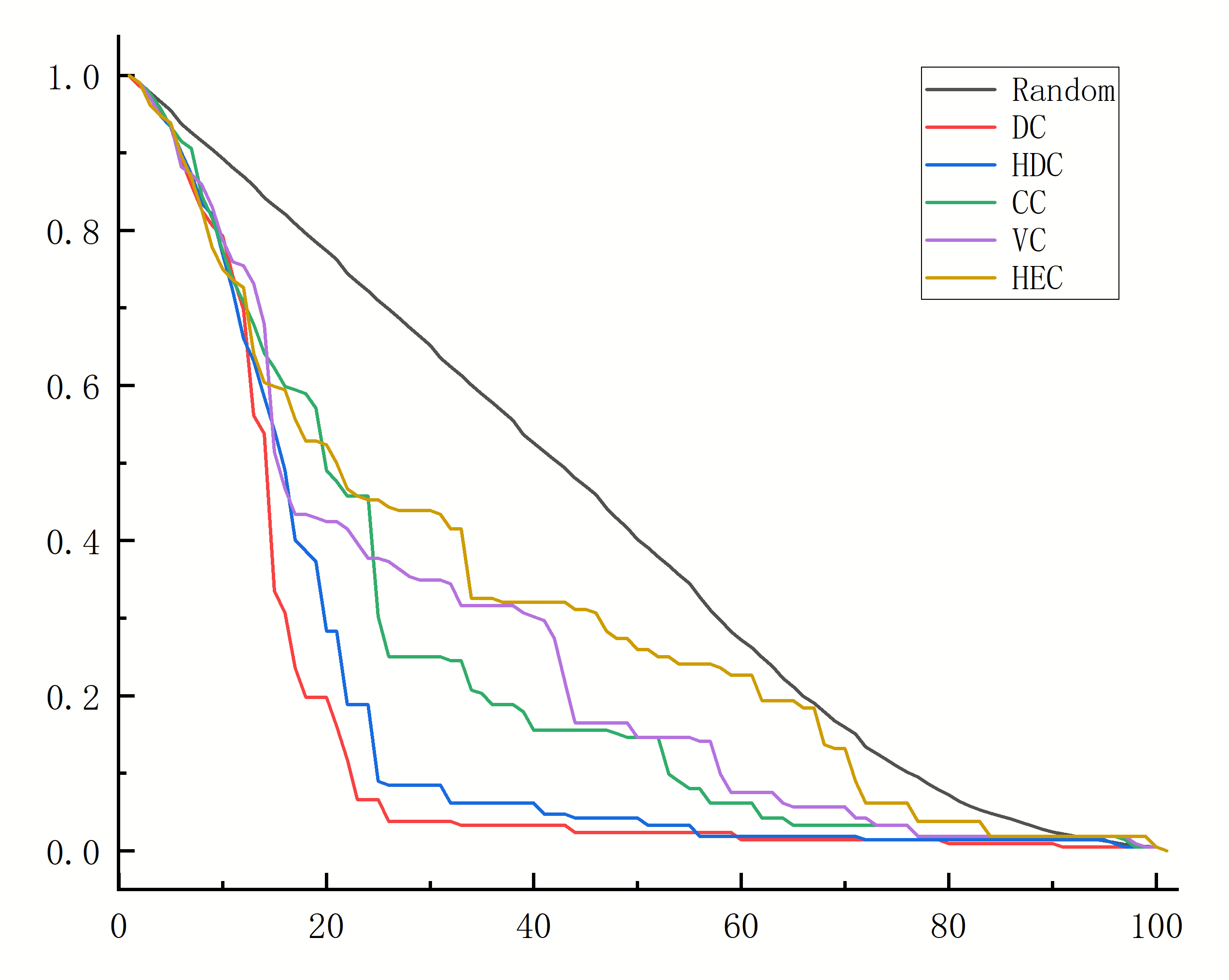}
    \end{minipage}%
  }
    \hfill
    \subfloat[Roget.]{%
    \begin{minipage}[t]{0.35\textwidth}
      \centering
      \includegraphics[width=\linewidth]{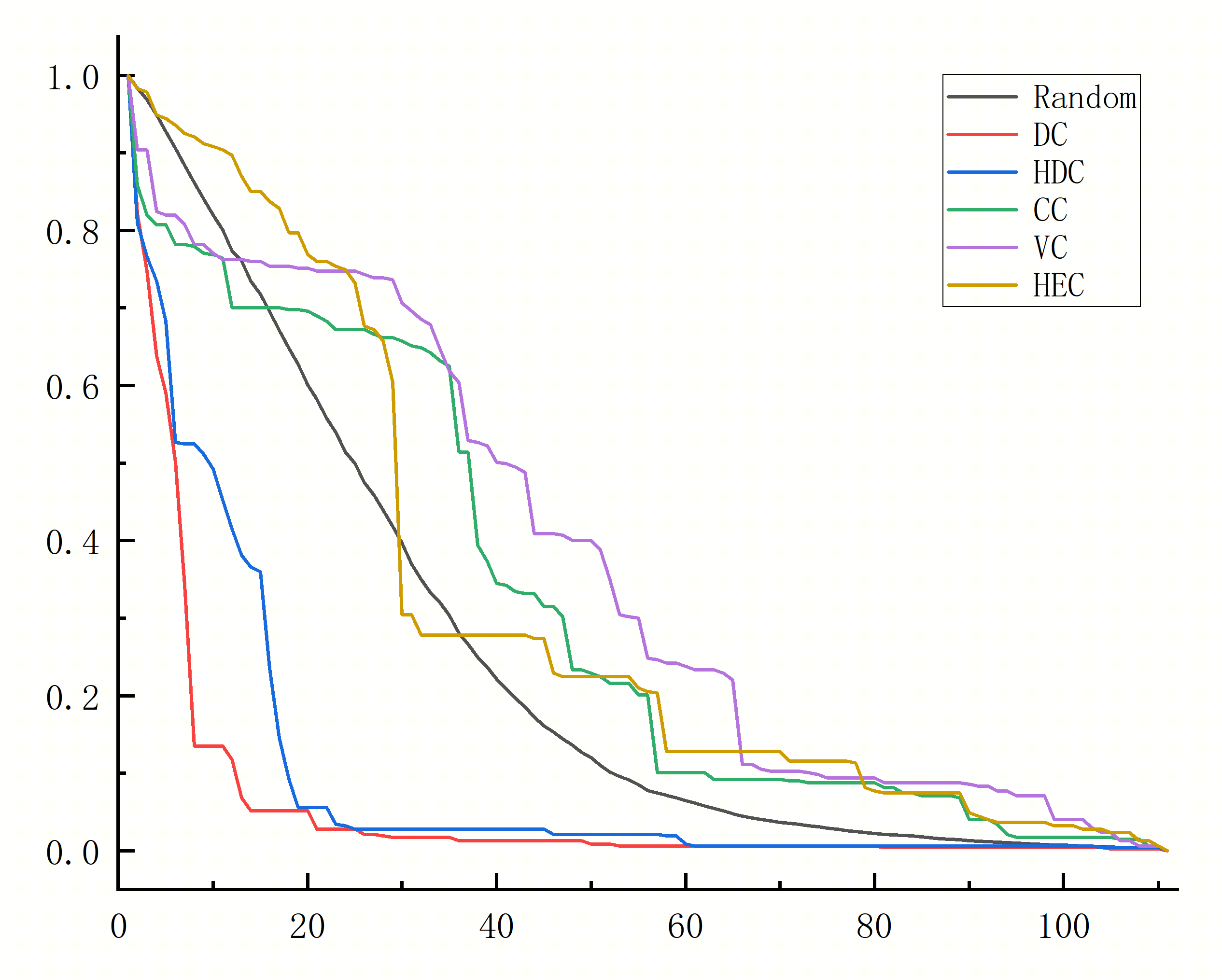}
    \end{minipage}%
  }
  \subfloat[Music-blues.]{%
    \begin{minipage}[t]{0.35\textwidth}
      \centering
      \includegraphics[width=\linewidth]{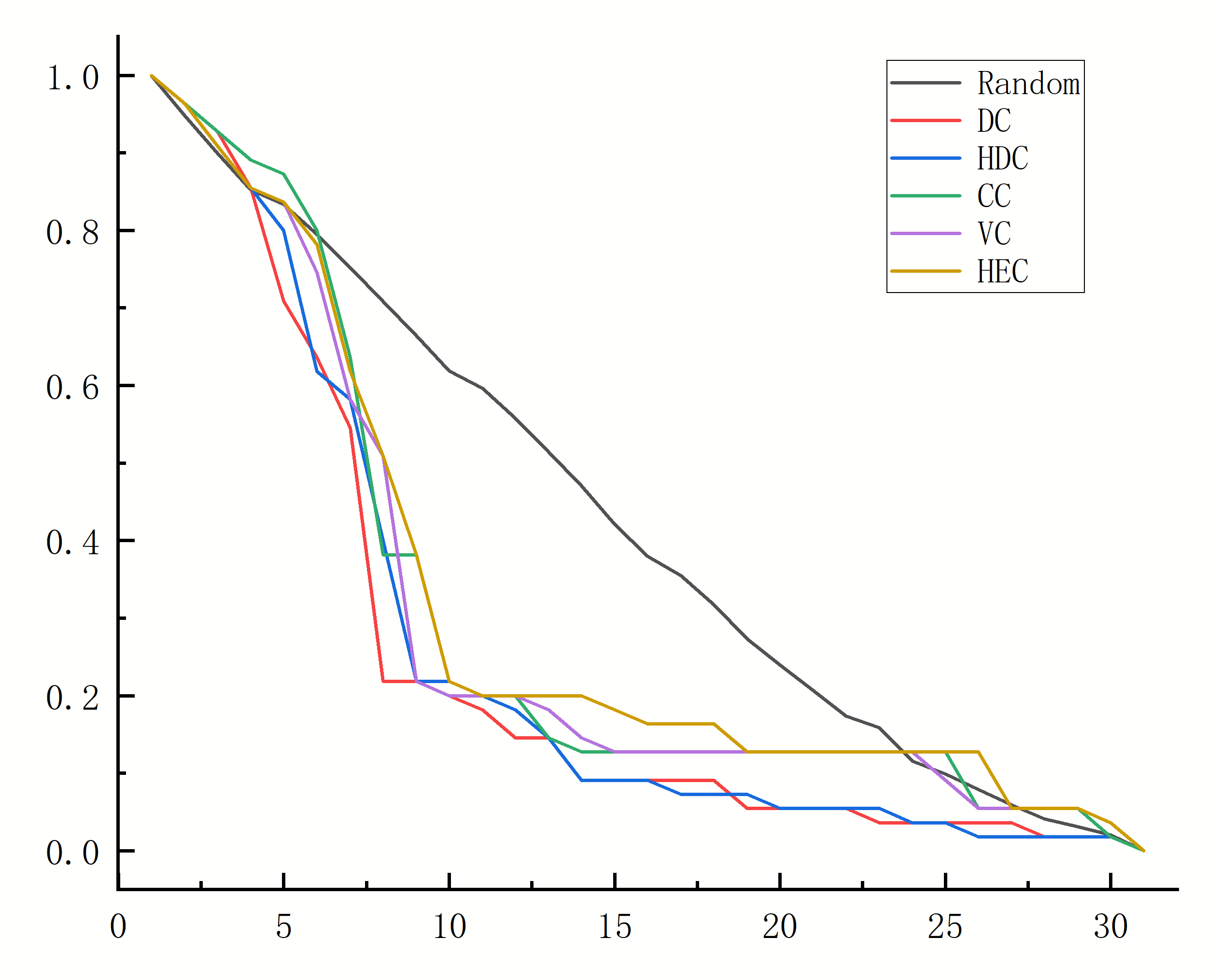}
    \end{minipage}%
  }
    \subfloat[Film-ratings.]{%
    \begin{minipage}[t]{0.35\textwidth}
      \centering
      \includegraphics[width=\linewidth]{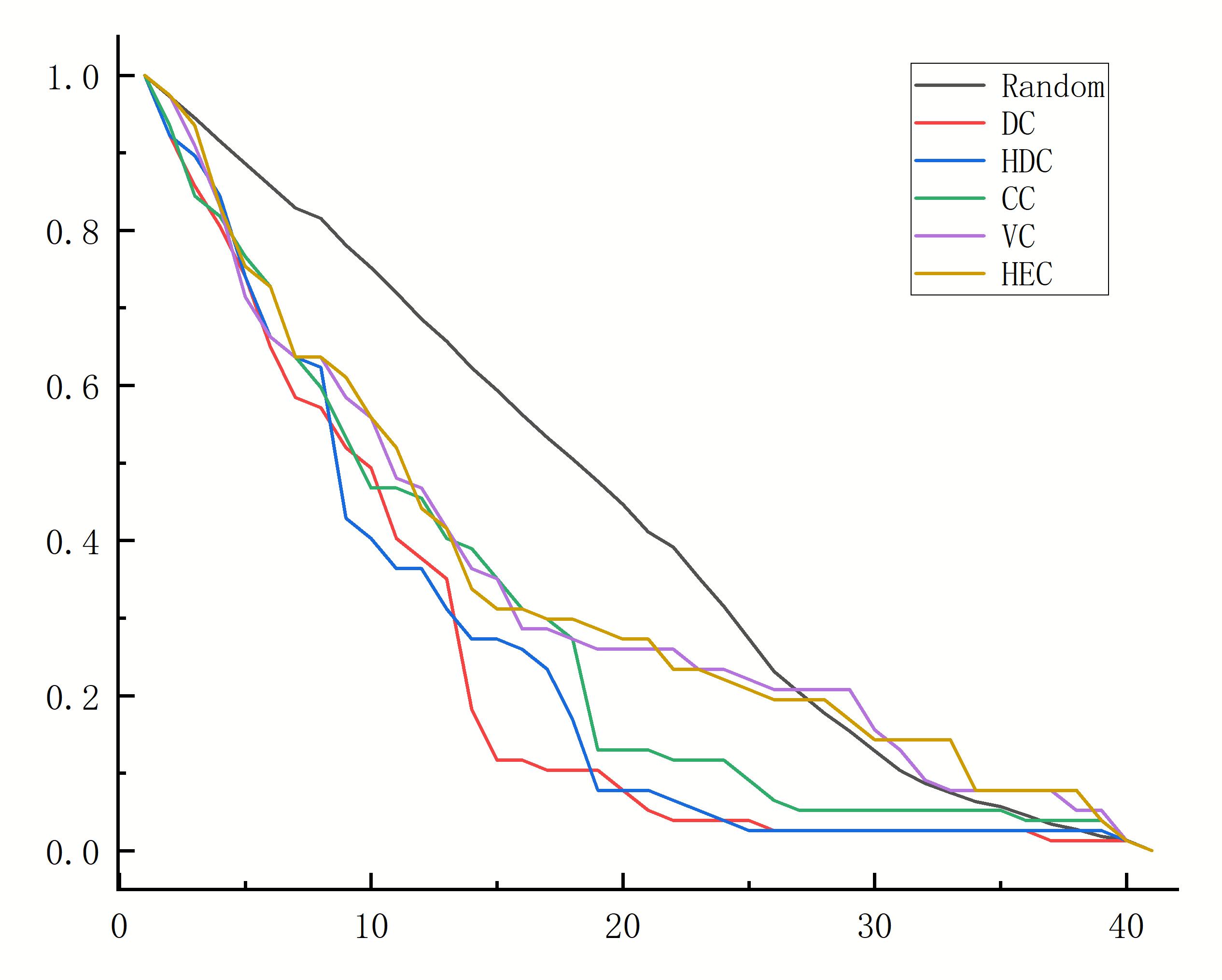}
    \end{minipage}%
  }
  \caption{The LCC decay curves obtained by removing vertices in descending order of five different centrality measures across the six hypergraphs.}
  \label{fig3}
\end{figure}

From Figure \ref{fig3}, we observe that in the Email-Enron, Geometry, Music-blues, and Film-ratings, the HEC curve largely coincides with those of the other centralities during the early stage of attacks. This indicates that HEC is effective at identifying vertices that play critical structural roles in hypergraphs.
In the middle and late stages of attacks, the HEC curve generally lies above the curves of the other centralities and remains close to that of VC. This is because both HEC and VC account for the influence of hyperedge sizes on vertex importance. In particular, HEC incorporates a global recursive mechanism in which a vertex's centrality is influenced by the other vertices sharing the same hyperedges, thereby capturing richer higher-order information. As a result, the LCC decay rate under HEC is slower.

In Roget, DC and HDC exhibit a sharp drop in LCC at the very beginning of the attack, whereas the LCC curves for HEC, CC, and VC even lie above the random removal baseline. This suggests that in hypergraphs with certain special topological structures, relying solely on degree-based measures can lead to drastically different assessments of vertex importance compared to methods that consider global structural information. In summary, HEC not only identifies the most critical vertices in a hypergraph but also prioritizes the overall structural resilience and connectivity maintenance capability, rather than merely local connectivity strength. Thus, HEC provides a valuable theoretical foundation for robustness analysis and critical vertex protection in hypergraphs.

The Jaccard index \cite{jaccard1901etude}, also known as the Jaccard similarity coefficient, is a statistic used for comparing the similarity and diversity of two finite sample sets. It is defined as the size of the intersection of the sets divided by the size of their union.
For two sets $A$ and $B$, the Jaccard index $J(A,B)$is given by:
\begin{align*}
J(A,B)=\frac{|A\cap B|}{|A\cup B|},
\end{align*}
the index ranges from $0$ to $1$.

To quantify the consistency among different centrality measures in identifying the most important vertices, we compute the Jaccard index.
For each hypergraph, under the five centrality rankings, we extract the top $k$ vertices (with $k=5,10,15,20,25$ ) and calculate the Jaccard index for each pair of centrality methods.
The Jaccard index is defined as the size of the intersection divided by the size of the union of the two vertex sets, ranging from $0$ to $1$, where a higher value indicates greater overlap in the top-ranked vertices.
Based on these results, we plot heatmaps for the six datasets to visually illustrate the similarity patterns among different centrality measures at various thresholds $k$.

\begin{figure}[!t]
  \centering
  \subfloat[Email-Enron.]{%
    \begin{minipage}[t]{0.33\textwidth}
      \centering
      \includegraphics[width=\linewidth]{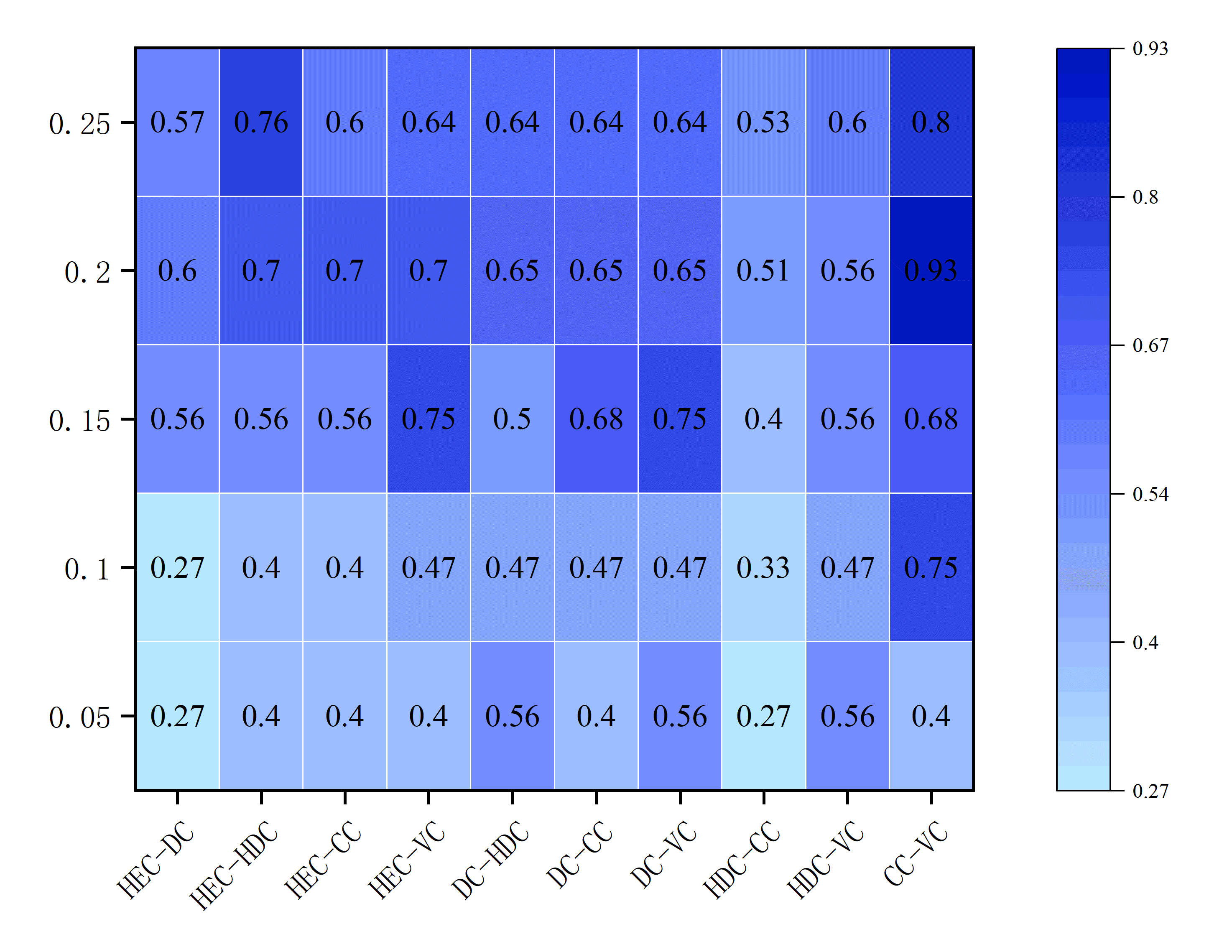}
    \end{minipage}%
  }\hfill
  \subfloat[Restaurant.]{%
    \begin{minipage}[t]{0.33\textwidth}
      \centering
      \includegraphics[width=\linewidth]{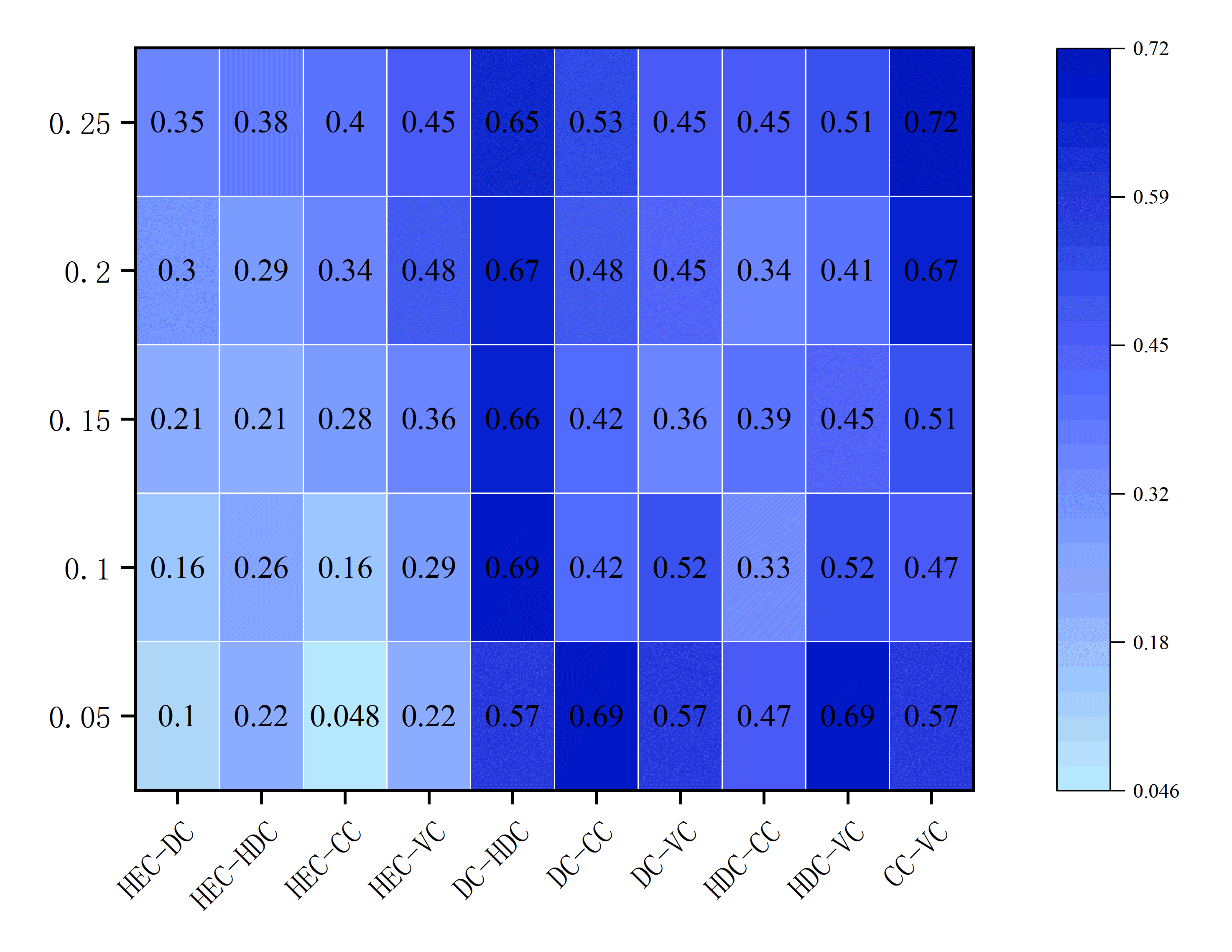}
    \end{minipage}%
  }\hfill
  \subfloat[Geometry.]{%
    \begin{minipage}[t]{0.33\textwidth}
      \centering
      \includegraphics[width=\linewidth]{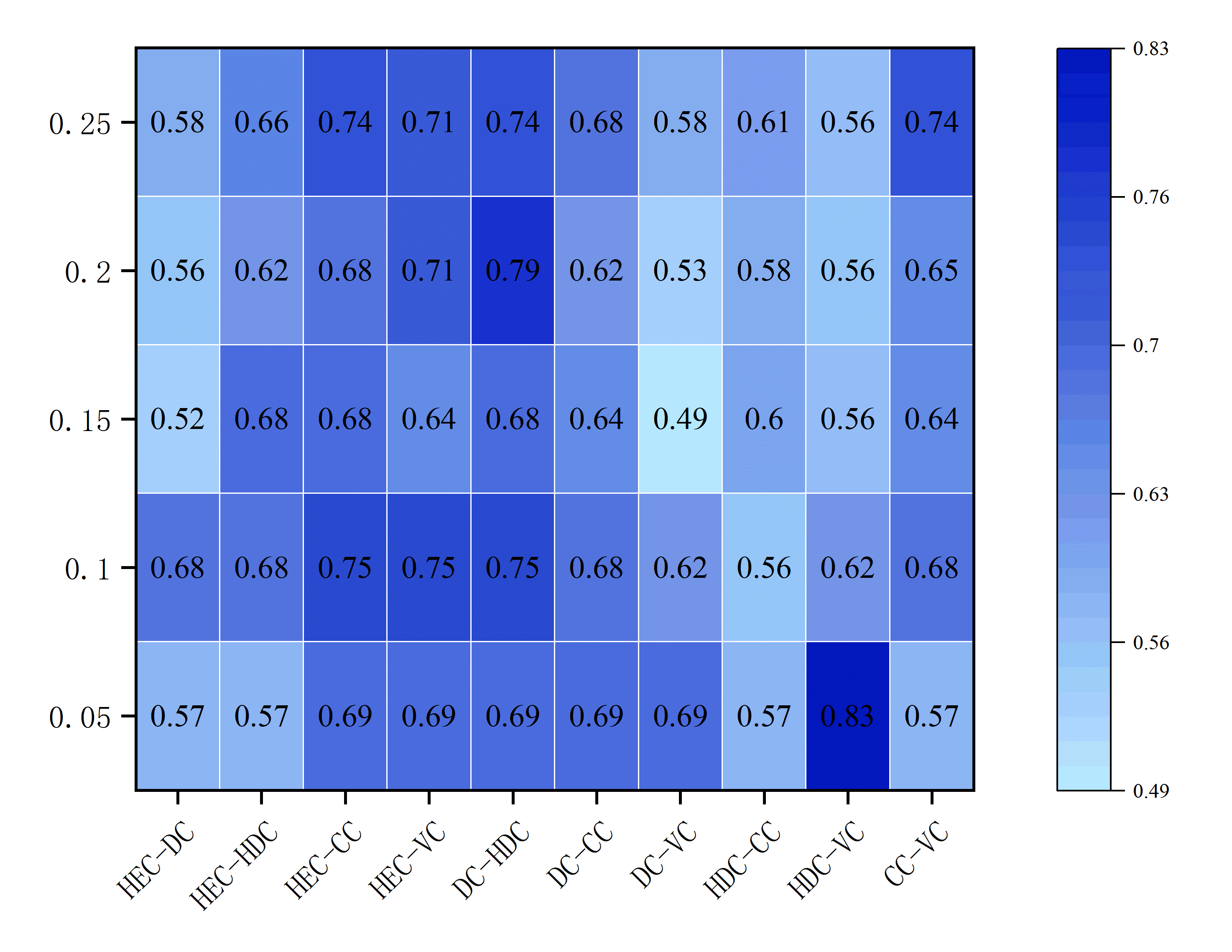}
    \end{minipage}%
  }\\
  \subfloat[Roget.]{%
    \begin{minipage}[t]{0.33\textwidth}
      \centering
      \includegraphics[width=\linewidth]{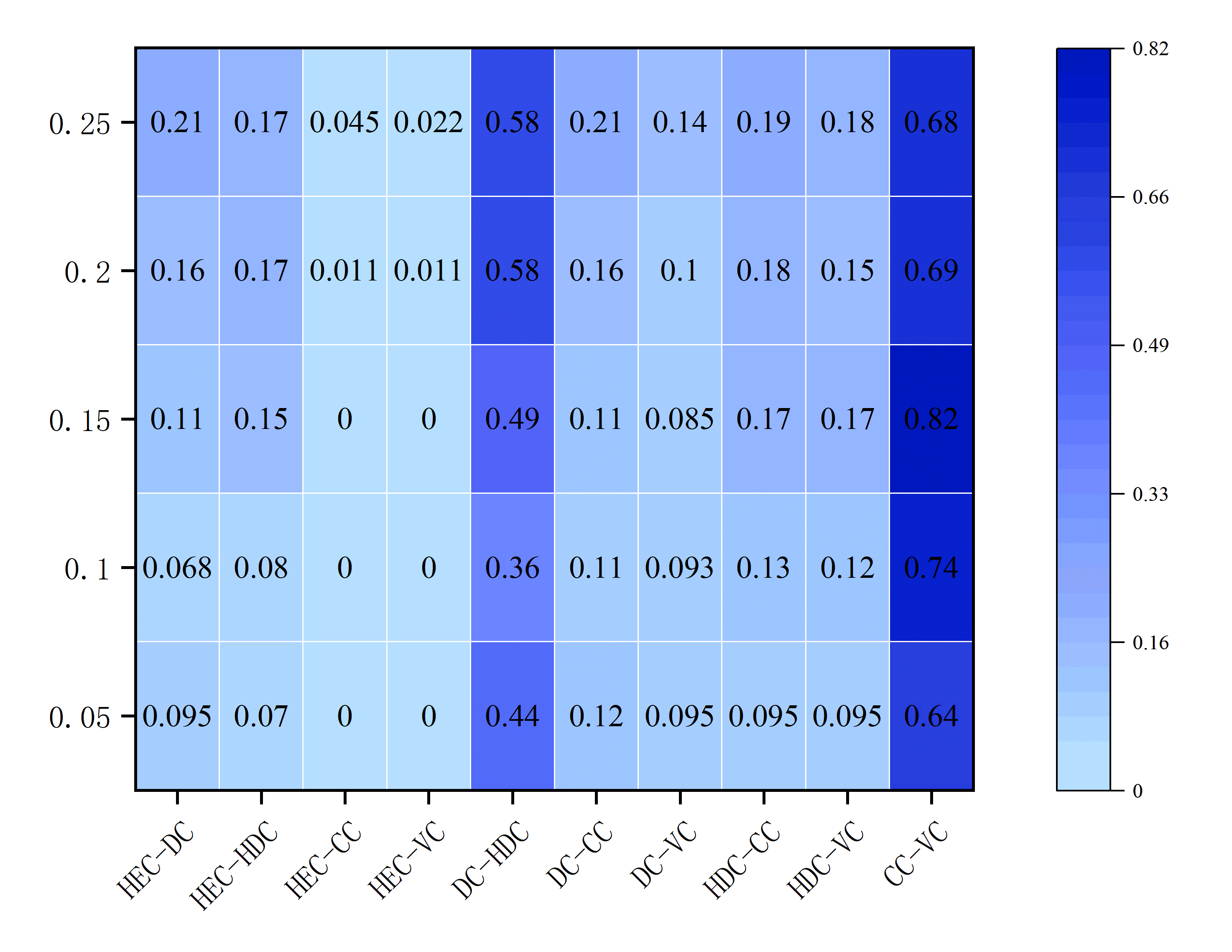}
    \end{minipage}%
  }\hfill
  \subfloat[Music-blues.]{%
    \begin{minipage}[t]{0.33\textwidth}
      \centering
      \includegraphics[width=\linewidth]{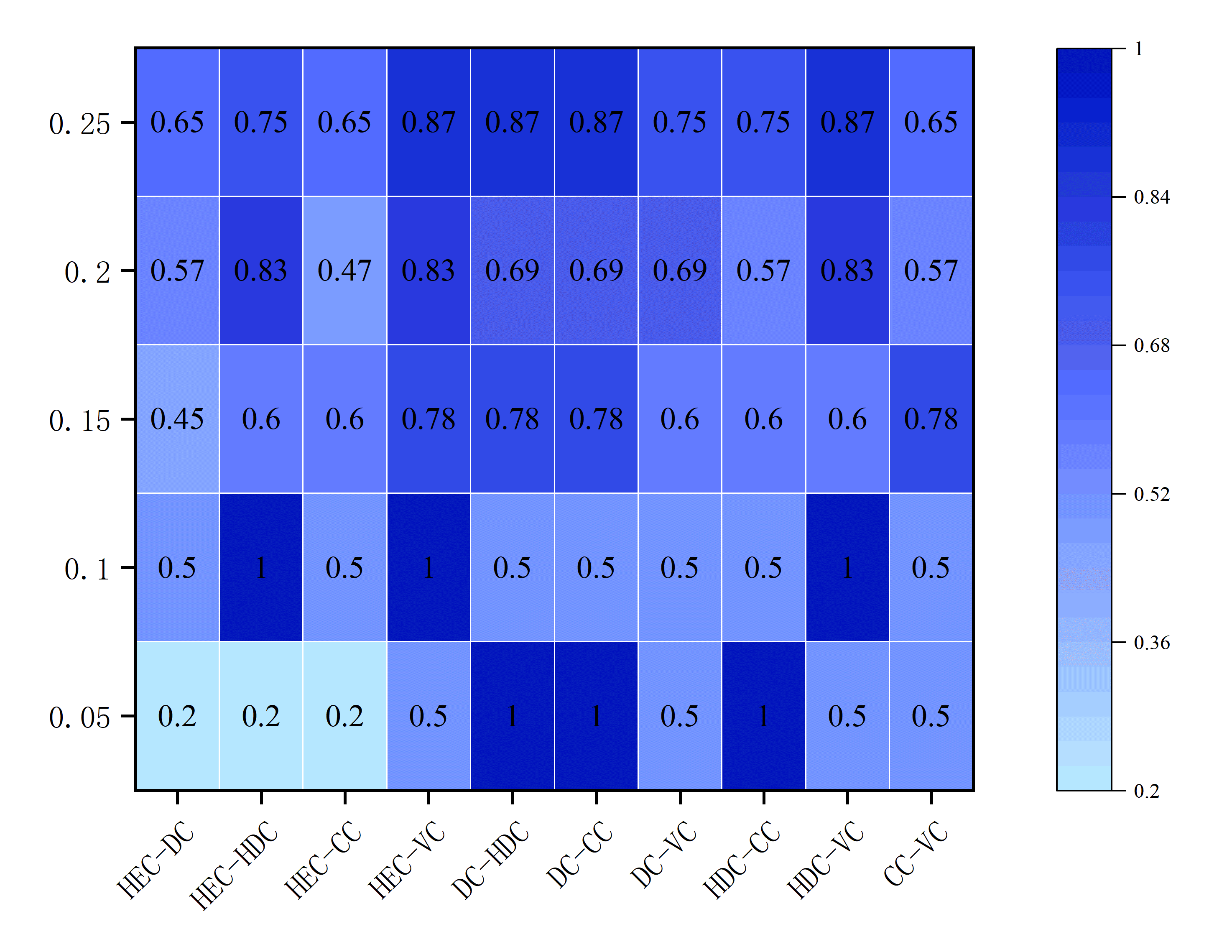}
    \end{minipage}%
  }\hfill
  \subfloat[Film-ratings.]{%
    \begin{minipage}[t]{0.33\textwidth}
      \centering
      \includegraphics[width=\linewidth]{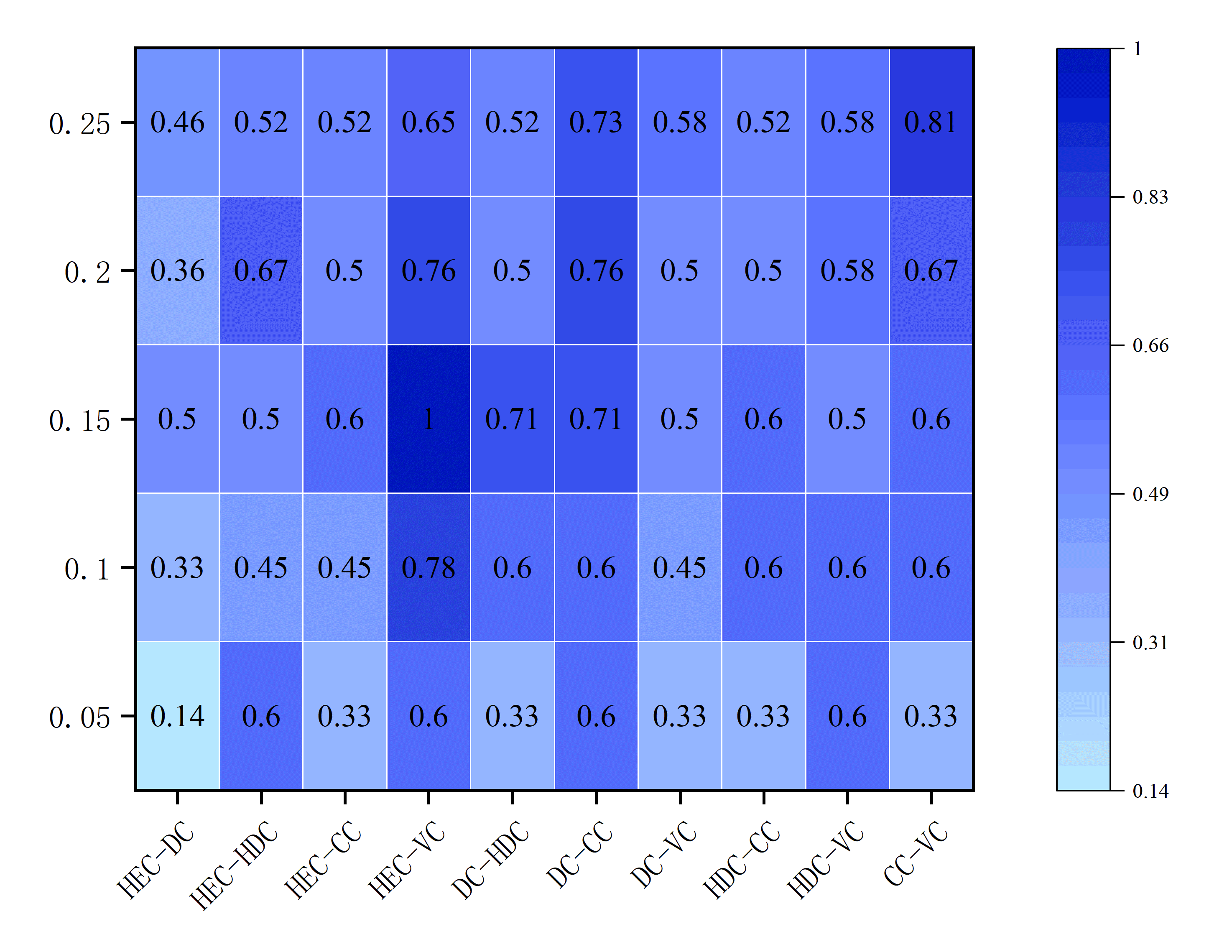}
    \end{minipage}%
  }
  \caption{Heatmaps of the Jaccard index among five centrality measures at different top-$k$ $(k=5,10,15,20,25)$ for the six real-world hypergraphs.}
  \label{fig4}
\end{figure}
%
As shown in Figure \ref{fig4}, the overlap between HEC and the classical centralities is generally low, with the most pronounced differences observed when focusing on the smallest sets of critical vertices ($k = 5, 10$), where Jaccard indices mostly far below the high overlaps observed among classical centralities themselves (e.g., DC-HDC, CC-VC).
As $k$ increases, the overlap between HEC and measures such as DC and HDC gradually rises to $0.5-0.7$, reflecting a convergence of vertex importance rankings across different centralities for broader ranges, while the very few most central vertices tend to be identified by multiple measures.

Overall,  HEC consistently exhibits relatively higher overlap with HDC and VC (especially for $k \geq 15$), whereas its overlap with CC remains persistently lower.
Notably, these overlap patterns are highly consistent across six networks with diverse topological characteristics, confirming the robust discriminative ability of HEC across different hypergraph structures.
In summary, as a novel vertex centrality, HEC identifies sets of critical vertices that are substantially different from those highlighted by classical centralities, demonstrating exceptional uniqueness in pinpointing the most influential vertices, and thereby offers a fresh perspective for vertex importance assessment in hypergraphs.

\section{Conclusion}
In this paper, we proposed a novel eigenvector centrality for hypergraphs, denoted by HEC.
The main new idea is to build an adjacency tensor for hypergraphs, which enables us to formulate a tensor eigenvalue problem whose positive solution yields the centrality scores of vertices.
For uniform hypergraphs, HEC reduces to the eigenvector centrality introduced by Benson. 
Moreover, in the limiting case where every hyperedge contains exactly two vertices, HEC further reduces to Bonacich's eigenvector centrality.
The proposed HEC captures higher-order interactions by aggregating the geometric mean of the scores of all vertices within each hyperedge.
Experimental results show that HEC shares certain correlations with classical centrality measures (e.g., degree, hyperdegree, clique-expansion, and vector centralities) in some networks, indicating a degree of consistency in identifying generally important vertices. However, HEC identifies critical vertices are often different from those highlighted by traditional measures, especially among top-ranked vertices, and exhibits superior robustness.
Importantly, vertices in smaller hyperedges tend to receive higher centrality because their scores depend more directly on a few neighbors, whereas larger hyperedges dilute the influence.
This enables precise identification of core vertices within small, dense groups while avoiding the spurious high influence induced by large hyperedges.

In summary, HEC provides a fresh perspective for assessing vertex importance in hypergraphs by jointly considering hyperedge heterogeneity, local influence, and global iterative. It complements existing centrality measures and offers a unique perspective.

Future work includes developing more efficient algorithms for computing HEC in very large hypergraphs to overcome the dimensionality issue of the current tensor-based approach, extending the framework to directed hypergraphs, weighted hypergraphs, or hypergraphs with temporal dynamics, investigating the spectral properties of hypergraphs within the proposed tensor framework and their applications in network analysis, and exploring its applications in specific domains such as recommendation systems, biological network analysis, and so on.
\section*{CRediT authorship contribution statement}
\textbf{Changjiang Bu:} Conceptualization,  Writing-editing. \textbf{Haotian Zeng:}  Conceptualization, Methodology, Writing-original draft. \textbf{Qingying Zhang:} Conceptualization, Methodology, Writing-original draft, Visualization, Validation.
The order of authors follows the alphabetical order of their surnames.
All authors have read and approved the final version of the manuscript.
\section*{Declaration of competing interest}
The authors declare that they have no known competing financial interests or personal relationships that could have appeared to influence the work reported in this paper.
\section*{Acknowledgments}
The research of the third author is partially supported by the National Natural Science Foundation of China (No.12071097) and the Natural Science Foundation for The Excellent Youth Scholars of the Heilongjiang Province (No. YQ2022A002).

\section*{References}
\bibliographystyle{plain}
\bibliography{ml0ht2}
\end{spacing}
\end{document}